\newcommand*{\MakeBox}[1]{\makebox[1.5em][r]{#1}}%
\DeclarePairedDelimiter\floor{\lfloor}{\rfloor}
\newcommand{\RNum}[1]{\uppercase\expandafter{\romannumeral #1\relax}}
\newcolumntype{M}[1]{>{\centering\arraybackslash}m{#1}}
\newtheorem{theorem}{Theorem} 
\begin{document}

\title{Orthogonal and Non-Orthogonal Signal Representations Using New Transformation Matrices Having NPM Structure}
\author{Shaik~Basheeruddin~Shah,~\IEEEmembership{Student~Member,~IEEE,}
~Vijay~Kumar~Chakka,~\IEEEmembership{Senior~Member,~IEEE,}
~and~Arikatla~Satyanarayana~Reddy
\thanks{Shaik Basheeruddin Shah and Vijay Kumar Chakka are with the Department of Electrical Engineering, Shiv Nadar University, India (e-mail: bs600@snu.edu.in, Vijay.Chakka@snu.edu.in).}
\thanks{Arikatla Satyanarayana Reddy is with the Department of Mathematics, Shiv Nadar University, India (e-mail: satyanarayana.reddy@snu.edu.in).}}
\maketitle

\begin{abstract}
In this paper, we introduce two types of real-valued sums known as Complex Conjugate Pair Sums (CCPSs) denoted as CCPS$^{(1)}$ and CCPS$^{(2)}$, and discuss a few of their properties.
Using each type of CCPSs and their circular shifts, we construct two non-orthogonal Nested Periodic Matrices (NPMs).
As NPMs are non-singular, this introduces two non-orthogonal transforms known as Complex Conjugate Periodic Transforms (CCPTs) denoted as CCPT$^{(1)}$ and CCPT$^{(2)}$. We propose another NPM, which uses both types of CCPSs such that its columns are mutually orthogonal, this transform is known as Orthogonal CCPT (OCCPT). 
After a brief study of a few OCCPT properties like periodicity, circular shift, etc., we present two different interpretations of it.
Further, we propose a Decimation-In-Time (DIT) based fast computation algorithm for OCCPT (termed as FOCCPT), whenever the length of the signal is equal to $2^v,\ v{\in}\mathbb{N}$.
The proposed sums and transforms are inspired 
by Ramanujan sums and Ramanujan Period Transform (RPT).
Finally, we show that the period (both divisor and non-divisor) and frequency information of a signal can be estimated using the proposed transforms with a significant reduction in the computational complexity over Discrete Fourier Transform (DFT).
\end{abstract}

\begin{IEEEkeywords}
Complex exponential, Ramanujan sum, CCPS, DFT, RPT, NPM, OCCPT, FOCCPT, Frequency estimation.
\end{IEEEkeywords}

\IEEEpeerreviewmaketitle

\section{Introduction}
\IEEEPARstart{I}{n} general, information of a finite length signal
like period/frequency is not manifested in a recognizable fashion from the signal itself. 
Such information can be extracted by representing the signal using certain types of bases. This makes signal representation as one of the fundamental problems in signal processing.

In literature, different application specific representations are proposed like DFT, Discrete Cosine Transform, Discrete Sine Transform,  etc., \cite{oppenheim1975digital,Proakis}. 
In $1918$, mathematician Srinivasa Ramanujan introduced an integer-valued summation known as {\em Ramanujan sum} \cite{Ramanujan}. It has some interesting properties like periodicity, orthogonality, etc., which attracted many researchers to use it for various applications \cite{Planat, Samadi, Sugavaneswaran}.
In $2014$, P. P. Vaidyanathan introduced a signal representation known as RPT by using Ramanujan sums and their circular shifts \cite{PP1}, \cite{PP2}.
It has certain periodicity properties, which are useful for period estimation \cite{PP2}. 
Moreover, the presence of an integer basis makes it computationally efficient.
Due to this, RPT has been used in many period estimation applications \cite{7527160, BrainStimuli, Monaural, 8335717, 7952298}.
Generalizing RPT, in \cite{7109930}, the authors introduced a family of full rank square matrices known as Nested Periodic Matrices (NPMs).
Members of the NPM family include 
Natural Basis matrix, Hadamard matrix, DFT matrix, RPT matrix, etc.

Let $p_1,p_2,\dots,p_m$ be all the divisors of $N$ (where $N{\in}\mathbb{N}$) and $s_{p_i}$ be a $\varphi(p_i)$ dimensional subspace consists of $p_i$ periodic signals, where $\varphi(.)$ is an Euler's totient function.
As $\sum_{{p_i}|N}^{}\varphi(p_i) = N$, an $N^{th}$ order NPM is constructed by providing the basis for all $\{s_{p_i}\}_{i=1}^m$. The commonality between different NPMs is that they span the same subspaces ($\{s_{p_i}\}_{i=1}^m$) by providing alternate bases.
Since ${p_i}|N$, $s_{p_i}$ is known as divisor subspace.
In the NPM family, the performance of RPT and DFT is good in period estimation \cite{7109930}.

Two discrete-time signals with different discrete frequencies may correspond to the same period. For example, two sinusoidal signals with discrete frequencies $\frac{2{\pi}}{5}$ and $\frac{2{\pi}(3)}{5}$ are periodic with period $5$.
Hence, extracting the frequency information is equally important as period extraction
and it is crucial in many applications. 
For instance, brain signals exhibit periodic nature in response to external visual stimuli \cite{BrainStimuli,661272}.
Further, they are classified into five frequency sub-bands (alpha, beta, theta, delta, and gamma) over a periodic duration.
Electrocardiogram (ECG) is another signal that exhibits periodic nature and we can segment it in QRS complex segment, ST segment, etc., 
based on the frequency range \cite{Elgendi}.

One of the limitations of RPT is that it does not provide the frequency information \cite{Shah},
whereas DFT gives period 
as well as frequency information with high computational complexity.
In this paper, we address this problem of signal representation using certain new transformation matrices following NPM structure, such that both period and frequency information can be extracted with less computational complexity.
\subsection{Contributions of This Paper}

Ramanujan sums are generated by adding certain complex exponential sequences satisfying some periodicity property \cite{Ramanujan}. The RPT matrix is constructed by using Ramanujan sums and their circular shifts \cite{PP1}, \cite{PP2}. 
Inspired by this, we have proposed two types of real-valued summations known as Complex Conjugate Pair Sum of type-$1$ (CCPS$^{(1)}$) and Complex Conjugate Pair Sum of type-$2$ (CCPS$^{(2)}$).
A few of their properties are also discussed.
Two new NPMs are constructed by using each type of CCPSs and their circular shifts as a basis for $\{s_{p_i}\}_{i=1}^m$.
Further, a finite length signal is represented by using these NPMs. The corresponding transforms are named as Complex Conjugate Periodic Transforms.
Based on the type of summation used, it is denoted as either CCPT$^{(1)}$ or CCPT$^{(2)}$, we have shown that both of these transforms are non-orthogonal. 
One of our previous works \cite{Shah} contains the basic idea of generating CCPS$^{(1)}$, CCPT$^{(1)}$ and their application, which are discussed in brief in this paper.

Another NPM is proposed by using both types of complex conjugate pair sums as a basis for $\{s_{p_i}\}_{i=1}^m$, such that the columns of the matrix are mutually orthogonal.
The corresponding transform is named as Orthogonal CCPT (OCCPT).
The proposed transforms may have applications in communication, image processing, control applications \cite{7842433,503278,8626481,8544037}.
Several important properties of OCCPT like  periodicity, circular convolution, etc., are stated and proved, and its 
relation with DFT is also discussed.
In addition, we have presented two different interpretations of OCCPT such that the period and frequency information is explicitly available in each interpretation.
If $N=2^v$, $v{\in}\mathbb{N}$, a 
DIT algorithm is proposed to reduce the computational complexity of 
OCCPT known as Fast OCCPT (FOCCPT).  
We have shown that this algorithm requires $Nlog_2(N)-N+1$ real multiplications and $2Nlog_2(N)-7\left(\frac{N}{2}\right)+5$ real additions for a given $x(n){\in}\mathbb{R}^N$.

We have proved that we can estimate the divisor period and its corresponding frequency information of a signal using the proposed transforms.
DFT is a standard transform that can serve for the same purpose,
so, the computational complexity of proposed transforms is compared with DFT.
If $N{\neq}2^v$, the complexities of OCCPT and DFT are compared using the direct method, though there exist efficient algorithms for DFT \cite{Burrus,Singleton}.
The following list of conclusions are drawn regarding the computational complexity between DFT and OCCPT:
\begin{itemize}
\item If $x(n){\in}\mathbb{C}^N$ and $N=2^v$, then their complexities are comparable with each other.
\item If $x(n){\in}\mathbb{R}^N$ and $N=2^v$, then Fast Fourier Transform (FFT) requires $2Nlog_2(N)$ real
multiplications and $3Nlog_2(N)$ real additions. This is approximately $50\%$ higher than FOCCPT complexity.
\item If $x(n){\in}\mathbb{C}^N$ and $N{\neq}2^v$, then the number of multiplications/additions required for OCCPT is approximately $50\%$ lower than DFT.
\item If $x(n){\in}\mathbb{R}^N$ and $N{\neq}2^v$, then DFT and OCCPT require the same complexity due to the complex conjugate symmetry of DFT coefficients.
\end{itemize}

So far we have dealt with NPMs, which are constructed using the basis of divisor subspaces. In general, there are scenarios where we deal with non-divisor subspaces.
One such scenario is the estimation of non-divisor periods and their corresponding frequencies using a dictionary based approach,
where the proposed CCPT dictionaries have approximately $75\%$ less computational complexity over the DFT dictionary.
Although, 
RPT dictionary has the computational advantage over CCPT dictionaries due to its integer-valued basis, but it does not provide the frequency information. 
As an example, we have evaluated the 
performance of proposed transforms on an ECG signal by considering the problem of R peak delineation.
The results are compared with DFT and RPT.
\subsection{Outline and Notations}
The structure of this paper is as follows: The NPM structure and its properties are briefly reviewed in Section \RNum{2}. 
The process of RPT matrix construction from the DFT matrix is explained in Section \RNum{3}. 
Section \RNum{4} discusses two types of complex conjugate pair sums and their properties. 
Then, two non-orthogonal transforms are introduced
by using each type of summation and its circular shifts in Section \RNum{5}.
In Section \RNum{6}, we introduce an orthogonal transform known as OCCPT and discuss a few of its properties.
A fast computation algorithm for OCCPT is proposed in Section \RNum{7}.
Later, in Section \RNum{8}, the proposed transforms are compared with DFT and RPT.
Conclusions are drawn in Section \RNum{9}.
The following notations are used throughout the paper:
\begin{enumerate}[label={\textbullet\protect\MakeBox{\arabic{enumi}\alph*.}}, align=left, before={\stepcounter{enumi}}, leftmargin=5pt]
\item[$\mathbb{N},\mathbb{Z}:$] Set of  natural numbers and integers respectively.
\item[$\mathbb{R}, \mathbb{C}:$] Set of  real numbers and complex numbers respectively.
\item[$\mathbf{A^T}:$] The transpose of a matrix $\mathbf{A}$.
\item[$\mathbf{A^H}:$] The conjugate transpose of a matrix $\mathbf{A}$.
\item[$r(\mathbf{A}):$]  The rank of a matrix $\mathbf{A}$.
\item[$(a,b):$] Greatest common divisor of  $a$ and $b$.
\item[$lcm:$] Least common multiple.
\item[$\floor*{a}:$] The greatest integer less than or equal to $a,$ where $a\in \mathbb{R}.$
\item[$a|b:$] $a$ divides $b$ and $a{\nmid}b$ denotes $a$ does not divide $b$.
\item[$\varphi:$] Euler's totient function, defined as 
$\varphi(n) = \sum\limits_{i=1}^{n}\floor*{\frac{1}{(i,n)}}$. Since $(i,n) = (n-i,n)$, $\varphi(n)$ is even for $n{\geq}3$.
\item[$M_{m,n}(\mathbb{C}):$] Set of all $m\times n$ matrices with entries from complex numbers. If $m=n$, $M_{m,n}(\mathbb{C})=M_n(\mathbb{C}).$
\item[$D_N:$] The set of all positive divisors of $N.$ We assume $D_N=\{p_1,p_2,\ldots,p_m\},$ where $1=p_1<p_2<\dots<p_m=N.$
\item[$((n))_N:$] Indicates $n \pmod N$.
\item[$U_n:$] $\{k{\in}\mathbb{N}|1\le k\le n,(k,n)=1\}.$ Hence the cardinality of $U_n$, i.e.,  $\# U_n$ is equal to $\varphi(n)$.
\item[$\hat{U}_n:$] $\{k{\in}\mathbb{N}{\mid}1{\leq}k{\leq}\floor*{\frac{{n}}{2}},(k,n)=1,n>2\}$. Hence $\#{\hat{U}_n} = \frac{\varphi({n})}{2}.$ Let $\tilde{U}_{n}={U}_{n}-\hat{U}_{n}$.
\item[$\left<a,b\right>:$]Dot product between $a$ and $b$.
\end{enumerate}
\section{Nested Periodic Matrix Structure}
\label{Nested Periodic Matrix Structure}
\textit{Definition 1:}\cite{7109930}\label{def:NPM}
 A matrix $\mathbf{M}\in M_N(\mathbb{C})$ is said to have {\em NPM structure} (or simply NPM) if 
\begin{equation}
\mathbf{M} = [\mathbf{M_{p_1}}, \mathbf{M_{p_2}},\dots, \mathbf{M_{p_i}},\dots, \mathbf{M_{p_m}}],
\label{NPB}
\end{equation}
satisfies the following three properties:
\begin{itemize}
\item $\mathbf{M_{p_i}}{\in}M_{N,\varphi(p_i)}(\mathbb{C})$ and $r(\mathbf{M_{p_i}}) = \varphi(p_i)$.  The size of $\mathbf{M}$ is $N{\times}N$ by invoking
 $\sum\limits_{{p_i}|N}\varphi(p_i) = N$ \cite{PP1}.
\item $\mathbf{M}$ is a full rank matrix.
\item Each column in $\mathbf{M_{p_i}}$ is a $p_i$ periodic sequence.
\end{itemize}

Since $\mathbf{M}$ is a non-singular matrix, the columns of $\mathbf{M}$ form a basis  for $\mathbb{C}^{N}$, known as {\em Nested Periodic Basis} (NPB). 
As a consequence, any finite $N$-length signal can be represented as a linear combination of columns of $\mathbf{M}.$ 
Let $s_{p_i}$ be the subspace spanned by the columns of $\mathbf{M_{p_i}}$, then 
\begin{enumerate}                   
\item The period of every element in $s_{p_i}$ is exactly $p_i$ \cite{7109930}.
\item Let $x(n) = \sum\limits_{i=1}^{M}x_{l_i}(n)$, where the period of $x_{l_i}(n)$ is equal to $l_i$.
In general the period of $x(n)$ is a divisor of $lcm(l_1,\dots,l_M)$. But if $x_{l_i}(n)\in s_{l_i}$, then the period of $x(n)$ is exactly equal to $lcm(l_1,\dots,l_M)$ \cite{7109930}.   \end{enumerate}
As an NPM construction involves the basis of divisor subspaces $\left(\{s_{p_i}\}_{i=1}^m\right)$, it is useful to extract the divisor periods information of a signal \cite{PP2}.

Let $\mathbf{A},\mathbf{B}\in M_N(\mathbb{C})$ denote DFT and RPT matrices respectively. 
Both $\mathbf{A}$ and $\mathbf{B}$ are NPMs \cite{7109930}. 
In the following sections, we imitate the construction procedure of $\mathbf{B}$ from $\mathbf{A}$ for constructing three more matrices $\mathbf{C},\mathbf{D}$ and $\mathbf{E}$ of order $N.$ 
Further, we show that $\mathbf{C},\mathbf{D}$ and $\mathbf{E}$ are NPMs.
First, we recall $\mathbf{A}$ and construction of $\mathbf{B}$ from $\mathbf{A}$ in the following section.
\section{RPT Matrix Construction From DFT Matrix}
\subsection{DFT Matrix}
\label{DFT} 
Let $S_{N,k}(n) = e^{\frac{j2{\pi}kn}{N}},\ 0{\leq}n{\leq}N-1$, then  
\begin{equation}
[\mathbf{A}]_{N{\times}N}=[S_{N,0}(n),S_{N,1}(n),\dots,S_{N,N-1}(n)].
\label{dftMtrx}
\end{equation} 
The period of $k^{th}$ column in $\mathbf{A}$ is equal to $\frac{N}{(k,N)}$, {\it i.e.,} divisor of $N.$ 
For each $p_i{\in}D_N$, the number of columns in $\mathbf{A}$ having period exactly equal to $p_i$ is $\#U_{p_i}$ $=$ $\varphi(p_i)$ \cite{Shah}.
If $U_{p_i} = \{k_1,k_2,\dots,k_{\varphi(p_i)}\}$, then form a sub-matrix $\mathbf{A_{p_i}}$ of $\mathbf{A}$ as given below:
\begin{equation}
\begin{aligned}
&[\mathbf{A_{p_i}}]_{N{\times}\varphi(p_i)} ={[\mathbf{\hat{A}_{p_i}},\mathbf{\hat{A}_{p_i}}\dots,\mathbf{\hat{A}_{p_i}}]^\mathbf{T}},
\text{ where}\\
\mathbf{\hat{A}_{p_i}}&={[{S}_{p_i,k_1}(n),{S}_{p_i,k_2}(n),\dots,{S}_{p_i,k_{\varphi(p_i)}}(n)]}_{{p_i}{\times}\varphi(p_i)}.
\label{Rpi}
\end{aligned}
\end{equation}
So, $\mathbf{A_{p_i}}$ is obtained by repeating $\mathbf{\hat{A}_{p_i}}$ periodically $\frac{N}{p_i}$ times.
 As $\sum_{p_i|N}\varphi(p_i) = N$, by constructing $\mathbf{\hat{A}_{p_i}}$ for every $p_i\in D_N$, we can build an $N\times N$ transformation matrix $\mathbf{A}$, whose columns are permutations of columns of $\mathbf{A}$ given in \eqref{dftMtrx}.
 Using the orthogonality and periodicity properties of $S_{N,k}$, one can check that $\mathbf{A}$ is an NPM \cite{7109930}.
In the following sections, we provide different alternative matrices to $\mathbf{\hat{A}_{p_i}}$ such as $\mathbf{\hat{B}_{p_i}}$, $\mathbf{\hat{C}_{p_i}}$, $\mathbf{\hat{D}_{p_i}}$ and $\mathbf{\hat{E}_{p_i}}$ to construct $\mathbf{B_{p_i}}$, $\mathbf{C_{p_i}}$, $\mathbf{D_{p_i}}$ and $\mathbf{E_{p_i}}$ followed by the construction of $\mathbf{B}$, $\mathbf{C}$, $\mathbf{D}$ and $\mathbf{E}$ respectively.
Now before proceeding further, we prove the following:
\begin{theorem}
Let ${x(n)=\sum_{i=1}^{M}S_{N,k_i}(n)}$, where $M{\leq}N$, ${0{\leq}{k_i},n{\leq}N-1}$ and the values $\{k_i\}_{i=1}^M$ are unique.
If $\mathbf{G_N^M}\in M_N(\mathbb{C})$ is a circulant matrix, whose first column is $x(n)$ and the remaining columns are the circular downshift of the previous columns, then  $r\big(\mathbf{G_N^M}\big) = M$.
\label{Th1}
\end{theorem}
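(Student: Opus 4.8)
The plan is to exploit the fact that a circular shift acts on each complex exponential $S_{N,k}(n)=e^{j2\pi kn/N}$ only by a scalar multiple: since $k$ is an integer, $S_{N,k}\big(((n-r))_N\big)=e^{-j2\pi kr/N}\,S_{N,k}(n)$. Hence the $(r{+}1)$-st column of $\mathbf{G_N^M}$, which is the $r$-fold circular downshift of $x(n)$, equals $\sum_{i=1}^{M}e^{-j2\pi k_i r/N}\,S_{N,k_i}(n)$ for $0\le r\le N-1$. This yields a factorization $\mathbf{G_N^M}=\mathbf{S}\mathbf{V}$, where $\mathbf{S}\in M_{N,M}(\mathbb{C})$ has columns $S_{N,k_1}(n),\dots,S_{N,k_M}(n)$ and $\mathbf{V}\in M_{M,N}(\mathbb{C})$ has entries $[\mathbf{V}]_{i,r}=e^{-j2\pi k_i r/N}$ with $1\le i\le M$ and $0\le r\le N-1$.

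Next I would show both factors have rank $M$. The columns of $\mathbf{S}$ are $M$ distinct columns of the $N\times N$ DFT matrix $\mathbf{A}$; since $\mathbf{A}$ is nonsingular (it is an NPM, as recalled above), any subset of its columns is linearly independent, so $r(\mathbf{S})=M$. For $\mathbf{V}$, restrict to the columns $r=0,1,\dots,M-1$: writing $\omega_i=e^{-j2\pi k_i/N}$, this $M\times M$ block has $(i,r)$ entry $\omega_i^{\,r}$, a Vandermonde matrix. Its nodes $\omega_i$ are pairwise distinct because the $k_i$ are unique and lie in $\{0,\dots,N-1\}$, so $N\mid(k_i-k_{i'})$ forces $k_i=k_{i'}$; thus the Vandermonde determinant is nonzero and $r(\mathbf{V})=M$.

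Finally I would combine these through the standard rank estimates for a matrix product: $r(\mathbf{G_N^M})=r(\mathbf{S}\mathbf{V})\le\min\{r(\mathbf{S}),r(\mathbf{V})\}=M$, while Sylvester's rank inequality gives $r(\mathbf{S}\mathbf{V})\ge r(\mathbf{S})+r(\mathbf{V})-M=M$. Hence $r(\mathbf{G_N^M})=M$. An equivalent route worth mentioning is that $\mathbf{G_N^M}$ is circulant, hence diagonalized by the DFT, with eigenvalues equal to the DFT coefficients of its first column $x(n)$; a direct computation shows exactly $M$ of these coefficients are nonzero (one per $k_i$), again giving rank $M$.

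I do not anticipate a serious obstacle; the only points requiring care are getting the sign and indexing right in the circular-shift/scalar identity, and justifying that the Vandermonde nodes $e^{-j2\pi k_i/N}$ are genuinely distinct, which is precisely where the hypothesis that the $k_i$ are unique (and confined to $\{0,\dots,N-1\}$) enters.
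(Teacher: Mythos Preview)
Your proposal is correct and uses essentially the same factorization as the paper: writing the $r$-th circular shift of $x(n)$ as $\sum_i e^{-j2\pi k_i r/N}S_{N,k_i}(n)$ gives $\mathbf{G_N^M}=\mathbf{S}\mathbf{V}$, and this is exactly the paper's decomposition $\mathbf{G_N^M}=\mathbf{B}\,\mathbf{B^H}$ with $\mathbf{B}=\mathbf{S}$, since your $\mathbf{V}$ has $(i,r)$ entry $e^{-j2\pi k_i r/N}=\overline{S_{N,k_i}(r)}$, i.e.\ $\mathbf{V}=\mathbf{S^H}$.

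The only difference is in the rank step. The paper notes that the columns of $\mathbf{B}$ are orthogonal (so $r(\mathbf{B})=M$) and then invokes the standard identity $r(\mathbf{B}\mathbf{B^H})=r(\mathbf{B})$ to finish in one line. You instead establish $r(\mathbf{V})=M$ independently via a Vandermonde submatrix and combine with Sylvester's inequality. Both are valid; the paper's route is shorter precisely because it recognizes the second factor as the Hermitian transpose of the first, which you might want to point out as a simplification of your argument. Your alternative remark via DFT diagonalization of circulants is also a clean and standard way to see the result.
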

\begin{proof}
The given circulant matrix $\mathbf{G_N^M}$ can be decomposed as $\mathbf{G_N^M} =\mathbf{B}_{N{\times}M}\mathbf{B^H}_{M{\times}N}$, where
\footnotesize
\begin{equation}
\nonumber
\begin{aligned}
\mathbf{B} &= \begin{bmatrix}
S_{N,k_1}(0)&
S_{N,k_2}(0)&
\dots&S_{N,k_M}(0) \\
S_{N,k_1}(1)&
S_{N,k_2}(1)&
\dots&S_{N,k_M}(1)  \\
\vdots& \vdots  & \ddots & \vdots\\
S_{N,k_1}(N-1)&
S_{N,k_2}(N-1)&
\dots&S_{N,k_M}(N-1)  
\end{bmatrix}.
\end{aligned}
\end{equation}
\normalsize
Here, the columns of $\mathbf{B}$ are orthogonal,
so $r(\mathbf{B})=M$. As $r(\mathbf{G_N^M}) = r(\mathbf{B})$ \cite{Strang}, this implies $r\big(\mathbf{G_N^M}\big)=M$.
\end{proof}
\subsection{RPT Matrix}
\label{RPT}
If we add all the columns of $\mathbf{\hat{A}_{p_i}}$ given in (\ref{Rpi}), it generates an integer-valued, $p_i$ periodic sequence $c_{p_i}(n)$, known as \textit{Ramanujan sum} \cite{Ramanujan}, \cite{Hardy}.
From \textbf{Theorem} \ref{Th1}, if we construct a ${p_i}{\times}{p_i}$ circulant matrix $\mathbf{G_{p_i}}$ using $c_{p_i}(n)$, then, $r(\mathbf{G_{p_i}})=\varphi(p_i)$. 
So, using $c_{p_i}(n)$ we can build a matrix $\mathbf{\hat{B}_{p_i}}$ as an alternative to $\mathbf{\hat{A}_{p_i}}$ as follows:
\begin{equation}
\mathbf{\hat{B}_{p_i}} ={[c_{p_i}^0(n),c_{p_i}^1(n),\dots,c_{p_i}^{\varphi(p_i)-1}(n)]}_{{p_i}{\times}\varphi(p_i)},
\label{Rpi1}
\end{equation}
where $c_{p_i}^j(n)$ indicates the circular downshift of the sequence $c_{p_i}(n)$ by $j$ times.
Let $\mathbf{B}{\in} M_N(\mathbb{C})$ be the matrix constructed using $\mathbf{\hat{B}_{p_i}}$, by following the similar way  of $\mathbf{A}$ construction from $\mathbf{\hat{A}_{p_i}}$. Then by invoking the orthogonality and periodicity properties of $c_{p_i}(n)$, it is shown in \cite{7109930} that $\mathbf{B}$ satisfies all the NPM properties. 
Here $\mathbf{B}$ is known as RPT matrix \cite{PP2}.

\textit{Remark 1:}
As $(k_j,p_i)=(p_i-k_j,p_i)$, for every complex sequence $S_{p_i,k_j}(n){\in}\mathbf{\hat{A}_{p_i}}$ there exists a complex conjugate sequence $S_{p_i,p_i-k_j}(n){\in}\mathbf{\hat{A}_{p_i}}$. Both together form a complex conjugate pair. So, there are $\frac{\varphi(p_i)}{2}$ complex conjugate pairs in $\mathbf{\hat{A}_{p_i}}$, {\em i.e.,} $\# {\hat{U}_{p_i}}$ \cite{Shah}.   
In \cite{7544641}, the authors introduced a two dimensional subspace spanned by $\{S_{p_i,k}(n),S_{p_i,p_i-k}(n)\}$ for each $k{\in}\hat{U}_{p_i}$, known as {\em Complex Conjugate Subspace (CCS)}, denoted as $v_{{p_i},k}$.
So, $v_{{p_i},k}$ consists of signals having period exactly equal to ${p_i}$ with discrete frequency $\frac{2{\pi}k}{{p_i}}$ (or) $\frac{2{\pi}{({p_i}-k)}}{{p_i}}$.
Let $\hat{U}_{p_i} = \{k_1,k_2,\dots,k_{\frac{\varphi(p_i)}{2}}\}$, then $\mathbf{\hat{A}_{p_i}}$ can be rewritten with permutation of its columns as follows:
\begin{equation}
\footnotesize
\mathbf{\hat{A}_{p_i}} = [\underbrace{{{S}_{p_i,k_1},\ {S}_{p_i,{p_i}-{k_1}}}}_{\text{Basis of } v_{p_i,k_1}},\dots,\ \underbrace{{S}_{p_i,k_{\frac{\varphi(p_i)}{2}}},\ {S}_{p_i,{p_i}-{k_{\frac{\varphi(p_i)}{2}}}}}_{\text{Basis of } v_{p_i,k_\frac{\varphi(p_i)}{2}}}]_{p_i{\times}\varphi(p_i)}.
\label{Rpi2}
\normalsize
\end{equation}

The following section introduces two types of arithmetic sums and their properties, which are used to construct 
alternate bases for CCS.

\section{Complex Conjugate Pair Sums and Their Properties}
In \cite{Shah}, we proposed a real-valued summation by adding each complex conjugate pair known as \textit{Complex Conjugate Pair Sum of type-1 (CCPS$^{(1)}$)}.
Given any $L\in\mathbb{N}$, the CCPS$^{(1)}$ $ \left(c_{L,k}^{(1)}(n)\right)$ is defined as follows:
\begin{equation}
c_{L,k}^{(1)}(n)
 = 2Mcos\left(\frac{2{\pi}{k}n}{L}\right),
\label{CCPS_Def}
\end{equation} 
where
\begin{equation}
     M =\begin{cases}
	\frac{1}{2},& \text{if}\ L=1\ \text{(or)}\ 2\\
    1, & \text{if }{L{\geq}3}
\end{cases},
    \label{M_Value}
\end{equation}
and if $L\geq 3$ then $k\in\hat{U}_L$, otherwise $k=1$.


Similar to addition, subtraction is another arithmetical operation, which can be performed on each complex conjugate pair without changing its periodicity. 
This defines another real-valued sum known as \textit{Complex Conjugate Pair Sum of type-2 (CCPS$^{(2)}$)}, denoted as $c_{L,k}^{(2)}(n)$ and defined as,
\begin{equation}
c_{L,k}^{(2)}(n) = \begin{cases} 1,&\forall n,\ \text{if}\ L=1\\
(-1)^n,&\text{if}\ L=2\\
2sin\big(\frac{2{\pi}{k}n}{L}\big),& \text{if}\ L{\geq} 3,\ k{\in}\hat{U}_L
\end{cases}.
\end{equation}
%
Let $c_{L,k}^{(*)}(n)$ hereafter denotes either $c_{L,k}^{(1)}(n)$ (or) $c_{L,k}^{(2)}(n)$. 
\subsection{Properties}
\subsubsection{Periodicity} As  $c_{L,k}^{(*)}(n+L) = c_{L,k}^{(*)}(n),$ and $c_{L,(k+L)}^{(*)}(n) = c_{L,k}^{(*)}(n)$\footnote{Use the property $(k,L) = (k+L,L)$, to verify $c_{L,(k+L)}^{(*)}(n) = c_{L,k}^{(*)}(n).$},  CCPSs are periodic with respect to both $n$ and $k$ with period $L$. 
If $L$ is even, then  $c_{L,\left(k+\frac{L}{2}\right)}^{(*)}(n) = (-1)^nc_{L,k}^{(*)}(n)$ and $c_{L,k}^{(*)}\left(n+\frac{L}{2}\right) = (-1)^kc_{L,k}^{(*)}(n)$.
\subsubsection{Symmetric}
$c_{L,k}^{(1)}(n)$ and $c_{L,k}^{(2)}(n)$ (for $L{\geq}3$) are even and odd symmetric sequences respectively, with respect to both $k$ and $n$, i.e.,
$c_{L,k}^{(1)}(L-n)=c_{L,(L-k)}^{(1)}(n)=c_{L,k}^{(1)}(n)\text{ and }c_{L,k}^{(2)}(L-n)=c_{L,(L-k)}^{(2)}(n)=-c_{L,k}^{(2)}(n)$.
\subsubsection{DFT of CCPS} For a given $L{\in}\mathbb{N}$ and $l{\in}\hat{U}_L$, 
\begin{equation} \nonumber \begin{aligned} &DFT[c_{L,l}^{(1)}(n)] = C_{L,l}^{(1)}(k) = \begin{cases} L,&\text{if}\ k = l\ \text{(or)}\ L-l\\ 0,&\text{Otherwise} \end{cases},\\ &DFT[c_{L,l}^{(2)}(n)] = C_{L,l}^{(2)}(k) = \begin{cases} -jL,&\text{if}\ k = l\\ jL,&\text{if}\ k = L-l\\ 0,&\text{Otherwise} \end{cases}, \end{aligned} \end{equation} where $0{\leq}k{\leq}L-1$. The above results are obvious from the definition of Complex Conjugate Pair Sums (CCPSs).
\subsubsection{Sum and sum-of-squares} For a given $L>1$ and $l{\in}\hat{U}_L$ $\sum\limits_{n=0}^{L-1}c_{L,l}^{(*)}(n) = 0.$ Given $L{\in}\mathbb{N}$, using the Parseval's relation between $c_{L,l}^{(*)}(n)$ and $C_{L,l}^{(*)}(k)$, we can write:
\begin{equation} \begin{aligned} 
{\sum\limits_{n=0}^{L-1}\left(c_{L,k}^{(*)}(n)\right)^2 = \frac{1}{L}\sum\limits_{k=0}^{L-1}\left(|C_{L,l}^{(*)}(k)|\right)^2 = 2LM,\ }\normalsize 
\end{aligned} \end{equation} 
where $M$ is defined in (\ref{M_Value}).
\subsubsection{Orthogonality} Given $L = lcm(L_1,L_2)$, ${L_1}{\in}{\mathbb{N}}$, ${L_2{\in}{\mathbb{N}}}$,
$k_1{\in}\hat{U}_{L_1}$,
$k_2{\in}\hat{U}_{L_2}$,
${l_1}\in{\mathbb{Z}}$ and ${l_2}\in{\mathbb{Z}}$, then we can prove the following theorem: 
\begin{theorem}
Any two $L$ length CCPSs$^{(1)}$ (or) CCPSs$^{(2)}$ and their circular shifts are mutually orthogonal, {\em i.e.},
\begin{equation}
\begin{aligned}
&\sum\limits_{n=0}^{L-1}c_{L_1,k_1}^{(*)}(n-{l_1})c_{L_2,k_2}^{(*)}
(n-{l_2})\\& = 2L{M}cos\bigg(\frac{2{\pi}{k_1}({l_1-l_2})}{L_1}\bigg)\delta({L_1}-{L_2})\delta({k_1}-{k_2}).
\end{aligned}
\label{Orth1}
\end{equation}
\label{Th}
\end{theorem}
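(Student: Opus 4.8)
The plan is to turn the bilinear sum into a sum of complex exponentials over one full period $L$ and then invoke the elementary orthogonality relation that $\sum_{n=0}^{L-1}e^{j2\pi a n/L}$ equals $L$ when $L\mid a$ and $0$ otherwise. First I would dispose of the degenerate cases $L_1\in\{1,2\}$ or $L_2\in\{1,2\}$: there $c_{L_i,k_i}^{(*)}(n)$ equals $1$ or $(-1)^n$ regardless of the superscript, so both sides of the claimed identity follow from a one- or two-term direct computation (and, incidentally, reproduce the factor $M=\tfrac12$ correctly). From now on assume $L_1,L_2\ge 3$, hence $M=1$.

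For the main case I would write each sum through the complex exponentials $S_{L_i,k_i}(n)=e^{j2\pi k_i n/L_i}$, using $c_{L_i,k_i}^{(1)}(n)=S_{L_i,k_i}(n)+S_{L_i,L_i-k_i}(n)$ and $c_{L_i,k_i}^{(2)}(n)=-j\bigl(S_{L_i,k_i}(n)-S_{L_i,L_i-k_i}(n)\bigr)$. Since $L_i\mid L$, each factor is an $L$-periodic exponential $e^{j2\pi a_i n/L}$ with integer frequency index $a_i:=k_iL/L_i$ (the $L_i$-periodicity, hence $L$-periodicity, makes $c_{L_i,k_i}^{(*)}(n-l_i)$ meaningful for every $l_i\in\mathbb Z$), and a circular shift by $l_i$ only introduces a constant phase $e^{-j2\pi a_i l_i/L}$. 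Expanding the product $c_{L_1,k_1}^{(*)}(n-l_1)\,c_{L_2,k_2}^{(*)}(n-l_2)$ yields four exponential terms with frequency indices $\pm a_1\pm a_2$; summing over $n=0,\dots,L-1$ annihilates every term whose index is not $\equiv 0\pmod L$.

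The heart of the argument is the number theory deciding which terms survive. Because $(k_i,L_i)=1$, the fraction $k_i/L_i$ is in lowest terms with denominator exactly $L_i$; hence $a_1\equiv a_2\pmod L$, i.e.\ $k_1/L_1-k_2/L_2\in\mathbb Z$, forces (as each $k_i/L_i$ lies in $(0,\tfrac12]$, so the difference lies in $(-\tfrac12,\tfrac12)$) the equality $k_1/L_1=k_2/L_2$, whence $L_1=L_2$ and $k_1=k_2$. The alternative $a_1\equiv -a_2\pmod L$ means $k_1/L_1+k_2/L_2\in\mathbb Z$; since each summand is in $(0,\tfrac12]$ this would require $k_1/L_1=k_2/L_2=\tfrac12$, impossible for $L_i\ge 3$ because then $(k_i,L_i)=L_i/2>1$. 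So the two "$a_1-a_2$'' cross terms survive exactly when $L_1=L_2=:L$ and $k_1=k_2=:k$, in which case they contribute $L\bigl(e^{j2\pi k(l_2-l_1)/L}+e^{j2\pi k(l_1-l_2)/L}\bigr)=2L\cos\!\bigl(2\pi k_1(l_1-l_2)/L_1\bigr)$ for both types — in the type-$2$ case the two factors of $-j$ flip the common sign of the two surviving terms back to $+$, reassembling a cosine rather than a sine — while every other term vanishes. This is exactly the stated right-hand side with $M=1$.

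I expect the main obstacle to be keeping the case analysis honest: ensuring the $L_i\le 2$ situations (where $\hat U_{L_i}$ is not even defined and $M=\tfrac12$) are handled consistently, and carefully tracking the four sign–phase contributions in the type-$2$ expansion so that the two vanishing terms really are the $\pm(a_1+a_2)$ ones. As a cross-check I would also package the computation through the Parseval relation stated earlier together with the already-computed $L$-point DFTs of the (shifted) CCPSs, whose supports are the singletons $\{a_i,\,L-a_i\}$; the overlap condition on these supports is precisely the divisibility condition analysed above.
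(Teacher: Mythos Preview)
Your proposal is correct and follows essentially the same route as the paper's proof: expand each shifted CCPS via Euler's identity into a pair of complex exponentials, multiply out to obtain four exponential sums over $n=0,\dots,L-1$, and use the geometric-series orthogonality to kill all but the two ``difference'' terms in the matched case $L_1=L_2$, $k_1=k_2$. Your treatment is in fact a bit more scrupulous than the paper's, which writes the geometric series quotient without explicitly verifying the denominator is nonzero; your number-theoretic argument that $a_1\equiv -a_2\pmod L$ cannot occur for $L_i\ge 3$ fills exactly that gap.
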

\begin{proof}
Given in the appendix by assuming $c_{L_1,k_1}^{(*)} = c_{L_1,k_1}^{(1)}$.
\end{proof}
\begin{theorem} 
For a given ${L_1}{\geq}3$ and ${L_2}{\geq}3$, 
both CCPS$^{(1)}$ and CCPS$^{(2)}$ are orthogonal to each other, {\em i.e.,}
\begin{equation}
\begin{aligned}
&\sum\limits_{n=0}^{L-1}c_{L_1,k_1}^{(1)}(n-{l_1})c_{L_2,k_2}^{(2)}
(n-{l_2})\\&=2Lsin\bigg(\frac{2{\pi}{k_1}({l_1-l_2})}{L_1}\bigg)\delta({L_1}-{L_2})\delta({k_1}-{k_2}).
\end{aligned}
\end{equation}
\label{Th3}
\end{theorem}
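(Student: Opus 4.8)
The plan is to reduce the mixed product $c_{L_1,k_1}^{(1)}(n-l_1)\,c_{L_2,k_2}^{(2)}(n-l_2)$ to a single cosine sequence in $n$ and then exploit the fact that such a sequence sums to zero over a full period unless the frequency vanishes. Concretely, I would write $c_{L_1,k_1}^{(1)}(n-l_1)=2\cos\!\bigl(\tfrac{2\pi k_1(n-l_1)}{L_1}\bigr)$ and $c_{L_2,k_2}^{(2)}(n-l_2)=2\sin\!\bigl(\tfrac{2\pi k_2(n-l_2)}{L_2}\bigr)$ (legitimate since $L_1,L_2\ge 3$), and apply the product-to-sum identity $2\cos A\,\sin B = \sin(A+B)-\sin(A-B)$. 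This turns the summand into $2\bigl[\sin(\theta_+ n + \phi_+) - \sin(\theta_- n + \phi_-)\bigr]$ where $\theta_\pm = 2\pi(k_2/L_2 \pm k_1/L_1)$ and the phases $\phi_\pm$ absorb the shifts $l_1,l_2$. Summing $\sin(\theta n + \phi)$ for $n=0,\dots,L-1$ with $L=\mathrm{lcm}(L_1,L_2)$ gives $0$ whenever $\theta \not\equiv 0 \pmod{2\pi}$ (a geometric-series / Dirichlet-kernel computation), and gives $L\sin\phi$ when $\theta\equiv 0$.

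The next step is the arithmetic bookkeeping of when $\theta_\pm\equiv 0\pmod{2\pi}$, i.e. when $k_2/L_2 = \mp k_1/L_1$ as rationals. Since $k_1\in\hat U_{L_1}$ means $(k_1,L_1)=1$ and likewise $(k_2,L_2)=1$, the fractions $k_1/L_1$ and $k_2/L_2$ are already in lowest terms, so $k_2/L_2 = k_1/L_1$ forces $L_1=L_2$ and $k_1=k_2$; and $k_2/L_2 = -k_1/L_1$ is impossible for positive $k_1,k_2$ (here one must be slightly careful that the relevant congruence is mod $2\pi$, but since $k_1\le \lfloor L_1/2\rfloor$ and $k_2\le\lfloor L_2/2\rfloor$ the argument $\theta_- n$ genuinely runs through a nonzero frequency, and similarly $\theta_+$). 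Thus only the $\theta_-$ term can survive, and only on the diagonal $L_1=L_2=:L_1$, $k_1=k_2$, which is exactly the $\delta(L_1-L_2)\delta(k_1-k_2)$ factor. On that diagonal $\theta_-=0$ and $\phi_- = -2\pi k_1(l_1-l_2)/L_1$, so the surviving contribution is $-2\cdot L\sin\phi_- = 2L\sin\!\bigl(\tfrac{2\pi k_1(l_1-l_2)}{L_1}\bigr)$, matching the claimed right-hand side (note $M=1$ here since $L_1\ge 3$, consistent with Theorem \ref{Th}).

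The main obstacle — really the only nontrivial point — is making the "frequency is nonzero" step rigorous when the period mismatch is subtle, e.g. ensuring that $\theta_\pm$ failing to be a multiple of $2\pi$ is correctly characterized via the coprimality of $k_i$ with $L_i$, and handling the sum of $\sin(\theta n+\phi)$ over the common period $L$ cleanly (the cleanest route is to write $\sin(\theta n+\phi)=\mathrm{Im}\,e^{i\phi}e^{i\theta n}$ and sum the geometric series $\sum_{n=0}^{L-1}e^{i\theta n}$, which is $0$ precisely when $e^{i\theta L}=1$ but $e^{i\theta}\neq 1$). Everything else is the product-to-sum identity plus the divisor arithmetic already used in the proof of Theorem \ref{Th}. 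In fact, since Theorem \ref{Th} is proved in the appendix for the CCPS$^{(1)}$ case by essentially this method, I would simply mirror that argument, replacing the second cosine by a sine and tracking the resulting sign change that converts the $\cos$ on the right-hand side into a $\sin$.
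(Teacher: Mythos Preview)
Your proposal is correct and essentially matches the paper's approach. The paper does not give a separate proof of Theorem~\ref{Th3}; it simply remarks that one proceeds exactly as in the appendix proof of Theorem~\ref{Th}, expanding via Euler's identity into exponential geometric sums and tracking which frequencies survive---precisely the route you outline (your product-to-sum reduction followed by $\sin(\theta n+\phi)=\mathrm{Im}\,e^{i\phi}e^{i\theta n}$ is the same computation repackaged), including the coprimality argument that pins down the $\delta(L_1-L_2)\delta(k_1-k_2)$ diagonal.
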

The above theorem can be proved using the same approach used to prove \textbf{Theorem \ref{Th}}. If $L_1<3$ and $L_2<3$, then $c_{L_1,k_1}^{(1)}(n) =c_{L_2,k_2}^{(2)}(n)$, so \textbf{Theorem \ref{Th3}} satisfies \textbf{Theorem \ref{Th}}. 

Now we discuss, how these summations and their properties are used to
construct the basis of CCS, followed by the construction of $\mathbf{C},\mathbf{D}$, and $\mathbf{E}$ matrices.
\section{New Nested Periodic Matrices}
From \textbf{Theorem} \ref{Th1}, if we construct a ${p_i}{\times}{p_i}$ circulant matrix $\mathbf{G_{p_i,k}^{(1)}}$ using $c_{p_i,k}^{(1)}(n)$, then, $r(\mathbf{G_{p_i,k}^{(1)})}=2$.
Let $c_{{p_i},k}^{(*)j}$ indicate the circular downshift of the sequence ${c}_{{p_i},k}^{(*)}$ by $j$ times.
As $c_{{p_i},k}^{(1)}$ and $c_{{p_i},k}^{(1)1}$ are linearly independent, 
\textit{the first two columns of $\mathbf{G_{{p_i},k}^{(1)}}$ act as a basis for CCS \cite{Shah}.}
So, the matrix $\mathbf{\hat{C}_{p_i}}$ (an alternative to $\mathbf{\hat{A}_{p_i}}$ given in \eqref{Rpi2}) built using this new basis is 
\begin{equation}
\nonumber
\mathbf{\hat{C}_{p_i}} = [\underbrace{{c}_{{p_i},k_1}^{(1)},\ {c}_{{p_i},k_1}^{(1)1}}_{\text{Basis of } v_{p_i,k_1}},\dots,\ \underbrace{{c}_{{p_i},k_{\frac{\varphi(p_i)}{2}}}^{(1)},\ {c}_{{p_i},k_{\frac{\varphi(p_i)}{2}}}^{(1)1}}_{\text{Basis of } v_{p_i,k_\frac{\varphi(p_i)}{2}}}]_{p_i{\times}\varphi(p_i)}.
\label{Rpi3}
\end{equation}
It is shown in \cite{Shah} that the $\mathbf{C}$ built by using $\mathbf{\hat{C}_{p_i}}$ is an NPM. 
Then any $N$ length signal $\mathbf{x}$ can be represented/synthesized as
\begin{equation}
\begin{aligned}
\mathbf{x} &= \mathbf{C}\boldsymbol{\beta^{(1)}} = {[\mathbf{C_{p_1}},\dots, \mathbf{C_{p_i}}, \dots, \mathbf{C_{p_m}}]}_{N{\times}N}{\boldsymbol\beta^{(1)}},\\
&\text{where }[\mathbf{C_{p_i}}]_{N{\times}\varphi(p_i)} ={[\mathbf{\hat{C}_{p_i}},\mathbf{\hat{C}_{p_i}}\dots,\mathbf{\hat{C}_{p_i}}]^\mathbf{T}}
\end{aligned}
\label{CCPT1_Synthesis}
\end{equation}
and $\boldsymbol{\beta^{(1)}}$ is the transform coefficient vector.
Here $\mathbf{C}$ is a non-orthogonal matrix (refer \textbf{Theorem \ref{Th}}).
So,
\begin{equation}
{\boldsymbol{\beta^{(1)}}={\mathbf{C^{-1}}}\mathbf{x}}.
\label{CCPT1_Analysis} 
\end{equation}
This transformation from $\mathbf{x}$ to $\boldsymbol{\beta^{(1)}}$ is known as \textit{Complex Conjugate Periodic Transform (CCPT) \cite{Shah}, denoted as CCPT$^{(1)}$}. Both (\ref{CCPT1_Synthesis}) and (\ref{CCPT1_Analysis}) together form a CCPT$^{(1)}$ pair.
%
%

In a similar way, if we construct a ${p_i}{\times}{p_i}$ circulant matrix $\mathbf{G_{p_i,k}^{(2)}}$ using $c_{{p_i},k}^{(2)}(n)$, then $r(\mathbf{G_{{p_i},k}^{(2)})}=2$.
Further, $\mathbf{G_{{p_i},k}^{(2)}}$ can be factorized as (refer \textbf{Theorem} \ref{Th1} proof for $M=2$): 
\begin{equation}
\mathbf{G_{p_i,k}^{(2)}}=-j(\mathbf{F}\mathbf{\hat{F}^H}),\text{ where}
\label{Decomposition}
\end{equation} 
\begin{equation}
\nonumber
\footnotesize
\begin{aligned}
\mathbf{F^H}&=\begin{bmatrix}
S_{p_i,{p_i}-k}(0)&S_{p_i,{p_i}-k}(1)&\dots&S_{p_i,{p_i}-k}(p_i-1) \\
S_{p_i,k}(0)&S_{p_i,k}(1)&\dots&S_{p_i,k}({p_i}-1)
\end{bmatrix}_{2\times {p_i}}\ \\ 
\normalsize
\text{and}\\
\footnotesize
\mathbf{\hat{F}^H}&=\begin{bmatrix}
S_{p_i,{p_i}-k}(0)&S_{p_i,{p_i}-k}(1)&\dots&S_{p_i,{p_i}-k}(p_i-1)\\
-S_{p_i,k}(0)&-S_{p_i,k}(1)&\dots&-S_{p_i,k}({p_i}-1)
\end{bmatrix}_{2\times {p_i}}.
\end{aligned}
\normalsize
\end{equation}
From (\ref{Decomposition}), the column space of $\mathbf{G_{{p_i},k}^{(2)}}$ is same as the column space of $\mathbf{F}$ \cite{Strang}, which is $v_{{p_i},k}$. 
Moreover, one can check that \textit{the first two columns of $\mathbf{G_{{p_i},k}^{(2)}}$ {\em i.e.,} $c_{{p_i},k}^{(2)}$ and $c_{{p_i},k}^{(2)1}$ are linearly independent.
So, they act as a basis for CCS.}
This provides another alternative for $\mathbf{\hat{A}_{p_i}}$ as given below:
\begin{equation}
\mathbf{\hat{D}_{p_i}} = [\underbrace{{c}_{{p_i},k_1}^{(2)},\ {c}_{{p_i},k_1}^{(2)1}}_{\text{Basis of } v_{p_i,k_1}},\dots,\ \underbrace{{c}_{{p_i},k_{\frac{\varphi(p_i)}{2}}}^{(2)},\ {c}_{{p_i},k_{\frac{\varphi(p_i)}{2}}}^{(2)1}}_{\text{Basis of } v_{p_i,k_\frac{\varphi(p_i)}{2}}}]_{p_i{\times}\varphi(p_i)}.
\label{Rpi1}
\end{equation}
From \textbf{Theorem \ref{Th}} and (\ref{Rpi1}), there are a few points worth noting:
\begin{itemize}
\item As $\left<{c}_{{p_i},k_i}^{(2){l_1}},{c}_{{p_i},k_j}^{(2){l_2}}\right>= 0$, ${\forall\ {k_i{\neq}k_j}} $, the columns of $\mathbf{\hat{D}_{p_i}}$ are CCS wise orthogonal, this implies $r(\mathbf{\hat{D}_{p_i})} = \varphi(p_i)$.
\item Since $c_{{p_i},k_i}^{(2)}$ is a periodic sequence with period $p_i$, each column in $\mathbf{\hat{D}_{p_i}}$ is a $p_i$ periodic sequence. 
\item As $\left<c_{{p_i},k_i}^{(2){l_1}},c_{{p_j},k_j}^{(2){l_2}}\right>= 0$, $\forall {p_i}{\neq}{p_j}$, ${p_i|N}$, and ${p_j|N}$, the rank of 
$\mathbf{D}{\in}M_N(\mathbb{C})$ built using $\mathbf{\hat{D}_{p_i}}$ is equal to $N$. 
\end{itemize}
From the above three points $\mathbf{D}$ results in an NPM. Hence, any finite $N$ length signal $\mathbf{x}$ is represented/synthesized as:
\begin{equation}
\mathbf{x} = \mathbf{D}\boldsymbol{\beta^{(2)}} = {[\mathbf{D_{p_1}},\dots, \mathbf{D_{p_i}}, \dots, \mathbf{D_{p_m}}]}_{N{\times}N}{\boldsymbol\beta^{(2)}},
\label{Synthesis}
\end{equation}
where $\boldsymbol{\beta^{(2)}}$ is the transform coefficient vector.
Along with this, 
the product
$[\mathbf{D_{p_i}}]^\mathbf{T}[\mathbf{D_{p_j}}]=[\mathbf{0}]_{{\varphi(p_i)}\times{\varphi(p_j)}}$, ${\forall}\ {p_i}{\neq}{p_j}$, {\it i.e.,} $s_{p_i}$ is orthogonal to $s_{p_j}$. If the subspaces are orthogonal, they can be uniquely determined as \textit{Ramanujan Subspaces} \cite{7109930}.
From \textbf{Theorem \ref{Th}}, 
$\big<c_{p_i,k}^{(2)}(n),c_{p_i,k}^{(2)1}(n)\big>{\neq}0$, hence $\mathbf{D}$ is a non-orthogonal matrix.
So,
\begin{equation}
{\boldsymbol{\beta^{(2)}}={\mathbf{D^{-1}}}\mathbf{x}}.
\label{CCPT2_Analysis}
\end{equation}
The above transform is known as \textit{CCPT} and denoted as \textit{CCPT$^{(2)}$}. Both (\ref{Synthesis}) and (\ref{CCPT2_Analysis}) together form a CCPT$^{(2)}$ pair.
The non-orthogonality of the transformation matrices $\mathbf{C}$ and $\mathbf{D}$ results in both CCPT$^{(1)}$ and CCPT$^{(2)}$ as non-orthogonal transforms.
One can say that $\mathbf{C}$ and $\mathbf{D}$ are partially orthogonal as their columns are CCS wise orthogonal.
In the following section, we propose an orthogonal basis for CCS, this leads to the construction of another NPM $\mathbf{E}$.
\section{Orthogonal CCPT (OCCPT) and Its Properties}
Here, both $c_{{p_i},k}^{(1)}$ and $c_{{p_i},k}^{(2)}$ together act as a basis for $v_{p_i,k}$.
\begin{theorem} Given $v_{p_i,k}$ is the subspace spanned by $\{S_{p_i,k}(n),S_{p_i,p_i-k}(n)\}$ and $u_{p_i,k}$ is the subspace spanned by $\{c_{p_i,k}^{(1)}(n),c_{p_i,k}^{(2)}(n)\}$. Then $v_{p_i,k}$ is equal to $u_{p_i,k}$.
\label{Th4}
\end{theorem}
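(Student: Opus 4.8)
The plan is to produce an explicit invertible change of basis between the pair $\{S_{p_i,k}(n),\,S_{p_i,p_i-k}(n)\}$ and the pair $\{c_{p_i,k}^{(1)}(n),\,c_{p_i,k}^{(2)}(n)\}$; equality of the two spans then follows at once. Throughout I would work in the regime $p_i\ge 3$ with $k\in\hat{U}_{p_i}$, which is precisely where the CCS $v_{p_i,k}$ is defined (see Remark 1) and where $c_{p_i,k}^{(1)}$ and $c_{p_i,k}^{(2)}$ take their cosine and sine forms.

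First I would record two elementary identities. Since $(k,p_i)=(p_i-k,p_i)$, we have $S_{p_i,p_i-k}(n)=e^{j2\pi(p_i-k)n/p_i}=e^{-j2\pi kn/p_i}=\overline{S_{p_i,k}(n)}$. Hence, straight from the definitions of the CCPSs,
\begin{equation}
\nonumber
c_{p_i,k}^{(1)}(n)=2\cos\!\left(\frac{2\pi kn}{p_i}\right)=S_{p_i,k}(n)+S_{p_i,p_i-k}(n),
\end{equation}
\begin{equation}
\nonumber
c_{p_i,k}^{(2)}(n)=2\sin\!\left(\frac{2\pi kn}{p_i}\right)=-j\bigl(S_{p_i,k}(n)-S_{p_i,p_i-k}(n)\bigr).
\end{equation}
In particular $c_{p_i,k}^{(1)}$ and $c_{p_i,k}^{(2)}$ lie in $v_{p_i,k}$, so $u_{p_i,k}\subseteq v_{p_i,k}$.

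For the reverse inclusion I would invert the above $2\times2$ linear system; its coefficient matrix $\bigl[\begin{smallmatrix}1&1\\-j&j\end{smallmatrix}\bigr]$ has determinant $2j\neq0$, giving
\begin{equation}
\nonumber
S_{p_i,k}(n)=\frac{1}{2}\bigl(c_{p_i,k}^{(1)}(n)+j\,c_{p_i,k}^{(2)}(n)\bigr),\qquad S_{p_i,p_i-k}(n)=\frac{1}{2}\bigl(c_{p_i,k}^{(1)}(n)-j\,c_{p_i,k}^{(2)}(n)\bigr).
\end{equation}
Thus $v_{p_i,k}\subseteq u_{p_i,k}$, and combining the two inclusions yields $v_{p_i,k}=u_{p_i,k}$. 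Since $\{S_{p_i,k},S_{p_i,p_i-k}\}$ are linearly independent and the transition matrix is nonsingular, $\{c_{p_i,k}^{(1)},c_{p_i,k}^{(2)}\}$ is also linearly independent, so both subspaces are $2$-dimensional; one could alternatively phrase the whole argument simply as ``the transition matrix between the two ordered bases is invertible.''

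There is essentially no obstacle here: the result is a one-line change-of-basis statement once the cosine/sine decompositions are written down. The only points that need a word of care are invoking $(k,p_i)=(p_i-k,p_i)$ so that $S_{p_i,p_i-k}=\overline{S_{p_i,k}}$, and noting that for $p_i\le 2$ there is no complex conjugate pair, so the claim is vacuous (or trivial) there and the substantive content is the case $p_i\ge3$.
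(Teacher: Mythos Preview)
Your proof is correct and follows essentially the same approach as the paper: both establish mutual inclusion of the two spans via Euler's identity, with the paper simply stating that one can rewrite an arbitrary element of $v_{p_i,k}$ as a combination of $c_{p_i,k}^{(1)}$ and $c_{p_i,k}^{(2)}$ (and vice versa), while you make the change-of-basis matrix explicit. Your version is more detailed, but the underlying argument is identical.
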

\begin{proof}Let $x_{p_i,k}{\in}v_{p_i,k}$, then there exist ${\alpha_1},{\alpha_2}{\in}\mathbb{C}$ s.t. $x_{p_i,k}(n)= {\alpha_1}S_{p_i,k}(n)+{\alpha_2}S_{p_i,p_i-k}(n)$. Using Euler's identity, we can rewrite $x_{p_i,k}(n)= {\gamma_1}c_{p_i,k}^{(1)}(n)+{\gamma_2}c_{p_i,k}^{(2)}(n)$, ${\gamma_1},{\gamma_2}{\in}\mathbb{C}$, hence $x_{p_i,k}{\in}u_{p_i,k}$. Similarly, any $y_{p_i,k}{\in}u_{p_i,k}$ also belongs to $v_{p_i,k}$. Hence $v_{p_i,k}$ is equal to $u_{p_i,k}$.\end{proof}
So, another alternative to $\mathbf{\hat{A}_{p_i}}$ can be written as given below:
\begin{equation}
\nonumber
\begin{aligned}
\mathbf{\hat{E}_{p_i}} = [\underbrace{{c}_{{p_i},k_1}^{(1)},\ {c}_{{p_i},k_1}^{(2)}}_{\text{Basis of } v_{p_i,k_1}},\dots,\ \underbrace{{c}_{{p_i},k_{\frac{\varphi(p_i)}{2}}}^{(1)},\ {c}_{{p_i},k_{\frac{\varphi(p_i)}{2}}}^{(2)}}_{\text{Basis of } v_{p_i,k_\frac{\varphi(p_i)}{2}}}]_{{p_i}{\times}\varphi(p_i)}.
\end{aligned}
\end{equation}
If $p_i = 1$ (or) $2$, then $\mathbf{\hat{E}_{p_i}}{\in}M_{p_i,1}$ and ${c}_{{p_i},k}^{(1)}={c}_{{p_i},k}^{(2)}$. So, we can consider either ${c}_{{p_i},k}^{(1)}$ or ${c}_{{p_i},k}^{(2)}$. 
From \textbf{Theorem} \ref{Th3}, we can summarize two points:
\begin{itemize}
\item As $\left<c_{{p_i},k_i}^{(1)},c_{{p_i},k_j}^{(2)}\right>= 0$, $\forall\ {p_i|N},\ {k_i} \text{ and } {k_j}{\in}\hat{U}_{p_i}$, the columns of $\mathbf{\hat{E}_{p_i}}$ are orthogonal to each other. 
\item As $\left<c_{{p_i},k_i}^{(1)},c_{{p_j},k_j}^{(2)}\right>= 0$, $\forall\ {p_i|N}$, ${p_j|N}$, ${k_i}{\in}\hat{U}_{p_i}$ and ${k_j}{\in}\hat{U}_{p_j}$, the columns of the matrix $\mathbf{E}$ constructed using $\mathbf{\hat{E}_{p_i}}$ are mutually orthogonal. 
\end{itemize}
Furthermore, $\mathbf{E}$ is an NPM as both CCPSs are periodic. So, an $N$ length signal $x(n)$ is represented/synthesized as
\begin{equation}
\begin{aligned}
&\quad \mathbf{x} = \mathbf{E}\boldsymbol{\beta} = {[\mathbf{E_{p_1}},\dots, \mathbf{E_{p_i}}, \dots, \mathbf{E_{p_m}}]}_{N{\times}N} \boldsymbol{\beta},
\normalsize
\end{aligned}
\label{Rpi4}
\end{equation}
where $\boldsymbol{\beta}$ is the transform coefficient vector.
This transform
is known as \textit{Orthogonal CCPT}.
Though the columns of $\mathbf{E}$ are mutually orthogonal, the product $\mathbf{E^{T}E} = 2NM\mathbf{I}$, where  $M =\begin{cases}
	\frac{1}{2},& \text{if}\ p_i=1\ \text{(or)}\ 2\\
    1, & \text{if }\ {p_i{\geq}3}
\end{cases}$, and $\mathbf{I}{\in}M_N(\mathbb{C})$ is an identity matrix.
So the resultant analysis equation is
\begin{equation}
\boldsymbol{\beta} = \frac{1}{2NM}\mathbf{E^T}\mathbf{x}.
\end{equation}
The representation in (\ref{Rpi4}) can be written algebraically as 
\begin{equation}
x(n) = \sum\limits_{p_i|N}x_{p_i}(n)= \sum_{{p_i}|N} \sum\limits_{\substack{{k}=1\\(k,p_i)=1}}^{\floor*{\frac{p_i}{2}}}\underbrace{{\beta_{0{k}i}c_{p_i,k}^{(1)}(n)}+{\beta_{1{k}i}c_{p_i,k}^{(2)}(n)}}_{x_{p_i,k}{\in}v_{p_i,k}},
\label{Ortho_CCPT_Synth}
\end{equation}
where $0{\leq}n{\leq}N-1$ and $x_{p_i}{\in}s_{p_i}$.
Manipulating \eqref{Ortho_CCPT_Synth} algebraically with $c_{p_j,k_1}^{(1)}(n)$ leads to (where $p_j|N$ and $k_1{\in}\hat{U}_{p_j}$),
\begin{equation}
\begin{aligned}
\sum\limits_{n=0}^{N-1}x(n)c_{p_j,k_1}^{(1)}(n)=\sum_{{p_i}|N}&\sum\limits_{\substack{{k}=1\\(k,p_i)=1}}^{\floor*{\frac{p_i}{2}}}\bigg({\mathbf{Q}}+{\mathbf{R}}\bigg),\text{ here}
\end{aligned}
\end{equation}
\begin{equation}
\nonumber
\footnotesize
\mathbf{Q}=\sum\limits_{n=0}^{N-1}{\beta_{0{k}i}c_{p_i,k}^{(1)}(n)c_{p_j,k_1}^{(1)}(n)}=\ \begin{cases}
	2N{M}{\beta_{0{k_1}j}},& \text{if}\ {p_i=p_j}\\& \text{and}\ {k=k_1}\\
    0, &\text{Otherwise}
\end{cases},
\normalsize
\end{equation}
and 
$\mathbf{R} =\sum\limits_{n=0}^{N-1}{\beta_{1{k}i}c_{p_i,k}^{(2)}(n)c_{p_j,k_1}^{(1)}(n)}$ $=0$.
Similarly, by manipulating with $c_{p_j,k_1}^{(2)}(n)$, 
we can obtain the following 
\begin{equation}
\begin{aligned}
&{\beta_{0{k}i}} = \frac{1}{2N{M}}\sum\limits_{n=0}^{N-1}x(n)c_{p_i,k}^{(1)}(n),\ {p_i}|N\ \&\ {k{\in}\hat{U}_{p_i}},\\
&{\beta_{1{k}i}} = \frac{1}{2NM}\sum\limits_{n=0}^{N-1}x(n)c_{p_i,k}^{(2)}(n),\ {p_i}|N\ \&\ {k{\in}\hat{U}_{p_i}}.
\label{Ortho_CCPT_Analy}
\end{aligned}
\end{equation}
Here \eqref{Ortho_CCPT_Analy} is an analysis equation and \eqref{Ortho_CCPT_Synth} is a synthesis equation, both together form an \textit{$N$-point OCCPT pair}.
\subsection{Properties} 
Let $x(n),\ x_1(n)$ and $x_2(n)$ be the signals, whose OCCPT coefficients are 
 $\Big(\beta_{0ki},\beta_{1ki}\Big)$, $\Big(\hat{\beta}_{0ki},\hat{\beta}_{1ki}\Big)$ and $\Big(\tilde{\beta}_{0ki},\tilde{\beta}_{1ki}\Big)$
 respectively. These relationships are denoted as
\begin{equation}
\nonumber
\begin{aligned}
x(n)&{\xleftrightarrow{\text{N - OCCPT}}}\Big(\beta_{0ki},\beta_{1ki}\Big),
x_1(n){\xleftrightarrow{\text{N - OCCPT}}}\Big(\hat{\beta}_{0ki},\hat{\beta}_{1ki}\Big)\\ 
&\qquad\text{and }x_2(n){\xleftrightarrow{\text{N - OCCPT}}} \Big(\tilde{\beta}_{0ki},\tilde{\beta}_{1ki}\Big).
\end{aligned}
\end{equation}
Then we can derive the following properties:

%
\subsubsection{Circular Shift of a Sequence}
The $N$-point OCCPT of $x\big(((n-m))_N\big)$, for an arbitrary $m{\in}\mathbb{Z}$, is defined as
\begin{equation}
x_1(n) = x\Big(((n-m))_N\Big)\ {\xleftrightarrow{\text{N - OCCPT}}}\ \Big(\hat{\beta}_{0ki},\hat{\beta}_{1ki}\Big).
\end{equation}
\begin{equation}
\text{If }p_i = 1\text{ (or) }2:\ \hat{\beta}_{0ki} = {\beta}_{0ki}cos\left(\theta\right).
\end{equation}
\begin{equation}
\text{If }p_i{\geq}3:\begin{bmatrix}
\hat{\beta}_{0ki}\\
\\
\hat{\beta}_{1ki}
\end{bmatrix}= \begin{bmatrix}
cos\left(\theta\right) & sin\left(\theta\right) \\
\\
-sin\left(\theta\right) & cos\left(\theta\right)
\end{bmatrix}\begin{bmatrix}
\beta_{0ki} \\
\\
\beta_{1ki}
\end{bmatrix},
\label{Translation}
\end{equation}
where $\theta = \frac{2{\pi}k{((-m))_N}}{p_i}$. So, the circular delay in time results in proportionate rotation of transform coefficients.

\subsubsection{Circular Convolution} 
The $N$-point OCCPT of $x(n) = x_1(n){\circledast}x_2(n)$ is defined as
\begin{equation}
\nonumber
x(n) = \sum\limits_{l=0}^{N-1}x_1(l)x_2\Big(((n-l))_N\Big) \ {\xleftrightarrow{\text{N - OCCPT}}}\ \Big(\beta_{0ki},\beta_{1ki}\Big).
\end{equation}
\begin{equation}
\text{If }p_i = 1\text{ (or) }2:\ {\beta}_{0ki} = N\ \hat{\beta}_{0ki}\ \tilde{\beta}_{0ki}.
\end{equation}
\begin{equation}
\text{If }p_i{\geq}3:\ \begin{bmatrix}
{\beta}_{0ki}\\
\\
{\beta}_{1ki}
\end{bmatrix}= N\begin{bmatrix}
\hat{\beta}_{0ki} & -\hat{\beta}_{1ki} \\
\\
\hat{\beta}_{1ki} & \hat{\beta}_{0ki}
\end{bmatrix}\begin{bmatrix}
\tilde{\beta}_{0ki} \\
\\
\tilde{\beta}_{1ki}
\end{bmatrix}.
\label{Conv1}
\end{equation}
From above, the transform coefficients of circular convolution are satisfying the commutative property.
\subsubsection{Parseval's Relation}  
The OCCPT conserves the energy of a given signal.
In specific,
\begin{equation}
\begin{aligned}
\sum\limits_{n=0}^{N-1}|x(n)|^2 &= N\Big(|\beta_{011}|^2+|\beta_{012}|^2\Big)\\&+2N\sum_{\substack{{{p_i}|N}\\{p_i{\geq}3}}} \sum\limits_{\substack{{k}=1\\(k,p_i)=1}}^{\floor*{\frac{p_i}{2}}}|\beta_{0ki}|^2+|\beta_{1ki}|^2.
\end{aligned}
\end{equation}
Here the term $\beta_{012} = 0$, if $2$ is not a divisor of $N$. 
Proofs for the above three properties are given in the appendix.
\subsubsection{Periodicity}
Using the periodicity property (with respect to $k$) of CCPSs, it can be proved that $\left(\beta_{0ki},\beta_{1ki}\right)$ are periodic with period $N$, i.e.,
\begin{equation}
\beta_{0(k+N)i} = \beta_{0ki}\ \text{and}\ \beta_{1(k+N)i} = \beta_{1ki}.
\end{equation}
\subsection{Relation Between Orthogonal CCPT and DFT Coefficients}
Let $x(n){\in}\mathbb{C}^N$, then the DFT coefficients $X(k)$, $0\leq k \leq {N-1}$ are obtained by performing 
the dot product between $x(n)$ and $e^{\frac{-j2{\pi}kn}{N}}=cos\Big(\frac{2{\pi}kn}{N}\Big)-jsin\Big(\frac{2{\pi}kn}{N}\Big)$. 
For a given $k$, let $k_i = \frac{k}{d_i}$ and $p_i = \frac{N}{d_i}$ where $d_i=(k,N)$. 
Then the basis (\textit{cosine} and \textit{sine} terms) of $v_{p_i,k_i}$ establish relationship between $X(k)$ and $\Big(\beta_{0{k_i}{i}},\beta_{1{k_i}{i}}\Big)$. Let
\begin{equation}
x(n)\ {\xleftrightarrow{\text{N - DFT}}}\ X(k)\ \&\ x(n)\ {\xleftrightarrow{\text{N - OCCPT}}}\ \Big(\beta_{0{k_i}i},\beta_{1{k_i}i}\Big),
\end{equation}
where $\beta_{0{k_i}i} = \hat{\beta}_{0{k_i}i}+j\tilde{\beta}_{0{k_i}i}$ and $\beta_{1{k_i}i} = \hat{\beta}_{1{k_i}i}+j\tilde{\beta}_{1{k_i}i}$.
Then, by exploiting the analysis equation of DFT, OCCPT and symmetry properties of CCPSs, we can establish the following relation for every $p_i{\in}D_N$ and $k_i{\in}\hat{U}_{p_i}$:\\
\begin{equation}
\text{If }p_i = 1\text{ (or) }2:X(k)= X\left(\frac{N{k_i}}{p_i}\right)= N\beta_{0{k_i}i}.
\label{Relation_with_DFT}
\end{equation}
\begin{equation}
\begin{aligned}
&\text{If }p_i{\geq}3:\ X(k)= X\left(\frac{N{k_i}}{p_i}\right)\\
&= \begin{cases}
	N\left[\Big(\hat{\beta}_{0{k_i}i}+\tilde{\beta}_{1{k_i}i}\Big)+j\Big(\tilde{\beta}_{0{k_i}i}-\hat{\beta}_{1{k_i}i}\Big)\right],\ \text{if}\ {k_i}{\in}\hat{U}_{p_i}\\ \\
    N\left[\Big(\hat{\beta}_{0{k_i}i}-\tilde{\beta}_{1{k_i}i}\Big)+j\Big(\tilde{\beta}_{0{k_i}i}+\hat{\beta}_{1{k_i}i}\Big)\right],\ \text{if}\ {k_i}{\in}\tilde{U}_{p_i}
\end{cases}.
\end{aligned}
\label{Relation_with_DFT1}
\end{equation}
If $x(n){\in}\mathbb{R}^N$, then $\tilde{\beta}_{0{k_i}i} = \tilde{\beta}_{1{k_i}i} = 0$ in the above equation.
Since the DFT coefficients are periodic with $N$, consider $k=N$ whenever $k=0$, as it is an invalid case for $v_{N,k}$. 
From the relations given in \eqref{Relation_with_DFT} and \eqref{Relation_with_DFT1}, we can get both magnitude and phase information of a discrete frequency $\left(\frac{2{\pi}k}{N}\right)$ from OCCPT coefficients. 
The presence of circular downshift terms (in both $\mathbf{C}$ and $\mathbf{D}$) allows us to extract magnitude, and ut not the phase information from $\boldsymbol{\beta^{(1)}}$ and $\boldsymbol{\beta^{(2)}}$.
 
The notion of the period is explicitly available from the $N$-point OCCPT pair given in \eqref{Ortho_CCPT_Synth} and \eqref{Ortho_CCPT_Analy}, whereas the notion of frequency is represented in a succinct way. To get explicit frequency information, a reinterpretation of OCCPT followed by a DIT based fast computation algorithm for it are discussed in the following section.
\section{Fast OCCPT (FOCCPT)}
\subsection{Reinterpretation of OCCPT}
Consider the following sets of irreducible rational numbers
\begin{equation}
\begin{aligned}
H_1 &= \left\{\frac{k}{p_i}{\mid}\ 0{\leq}k{\leq}\floor*{\frac{{p_i}}{2}},(k,p_i)=1, p_i|N\right\},\\
H_2 &= \left\{\frac{\hat{k}}{p_i}{\mid}\ \floor*{\frac{{p_i}}{2}}+1{\leq}\hat{k}{\leq}{p_i-1},(\hat{k},p_i)=1, p_i|N, {p_i}>2\right\}.
\end{aligned}
\label{reint1}
\end{equation}
Both $H_1$ and $H_2$ can be rewritten as sets of all rational elements as given below:
\begin{equation}
\begin{aligned}
&H_1 = \left\{\frac{K}{N}{\mid}\ 0{\leq}K{\leq}\floor*{\frac{{N}}{2}}\right\}\ \text{and}\\ 
&H_2 = \left\{\frac{K}{N}{\mid}\ {\floor*{\frac{N}{2}}+1}{\leq}K{\leq}{N-1}\right\}. 
\end{aligned}
\label{reint2}
\end{equation}
In an $N$-point OCCPT, the total columns of $\mathbf{E}$ can be divided into two sets as
\begin{equation}
\nonumber
\begin{aligned}
H_3 &= \left\{ 2Mcos\left(\frac{2{\pi}kn}{p_i}\right){\mid}\ k{\in}{\hat{u}_{p_i}},{p_i}|N\right\},\\
H_4 
&=\left\{-2sin\left(\frac{2{\pi}\hat{k}n}{p_i}\right){\mid}\  \hat{k}=({p_i}-k){\in}{\tilde{u}_{p_i}} ,{p_i}|N,{p_i}>2  \right\}.
\end{aligned}
\end{equation}
Using \eqref{reint1} and \eqref{reint2}, we can reinterpret $H_3$ and $H_4$ as follows:
\begin{equation}
\begin{aligned}
H_3 &= \left\{2\hat{M}cos\left(\frac{2{\pi}Kn}{N}\right){\mid}\ 0{\leq}K{\leq}\floor*{\frac{N}{2}}\right\},\\
H_4 &= \left\{-2sin\left(\frac{2{\pi}Kn}{N}\right){\mid}\ {\floor*{\frac{N}{2}}+1}{\leq}K{\leq}{N-1}\right\},
\end{aligned}
\end{equation}
where, if $N$ is even then
$ \hat{M} =\begin{cases}
	\frac{1}{2},& \text{if}\ K=0\ \text{(or)}\ \frac{N}{2}\\
    1, &\ \text{otherwise}
\end{cases}\normalsize$ and if $N$ is odd then 
$ \hat{M} =\begin{cases}
	\frac{1}{2},& \text{if}\ K=0\\
    1, &\ \text{otherwise}
\end{cases}\normalsize.$
That is, $\mathbf{E}$ can be rewritten with some permutations of its columns as given below:
\begin{equation}
\begin{aligned}
\mathbf{\hat{E}}& = \Bigg[2\hat{M}cos\left(\frac{2{\pi}(0)}{N}n\right),\dots,2\hat{M}cos\left(\frac{2{\pi}\left(\floor*{\frac{N}{2}}\right)}{N}n\right),\\
&-2sin\left(\frac{2{\pi}{\left(\floor*{\frac{N}{2}}+1\right)}}{N}n\right),\dots,-2sin\left(\frac{2{\pi}(N-1)}{N}n\right)\Bigg].
\end{aligned}
\end{equation}
\normalsize
Now, 
we can rewrite the $N$-point OCCPT pair as
\begin{equation}
\beta(K) = \begin{cases}
	\frac{1}{N}\sum\limits_{n=0}^{N-1}x(n)cos\left(\frac{2{\pi}Kn}{N}\right),\ {0{\leq}K{\leq}\floor*{\frac{N}{2}}}\\
    -\frac{1}{N}\sum\limits_{n=0}^{N-1}x(n)sin\left(\frac{2{\pi}Kn}{N}\right),\ {\floor*{\frac{N}{2}}+1}{\leq}K{\leq}N-1.
\end{cases}
\label{Ortho_CCPT_Analy1}
\end{equation}
\footnotesize
\begin{equation}
x(n) = 2\sum\limits_{K=0}^{\floor*{\frac{N}{2}}}\hat{M}\beta(K)cos\left(\frac{2{\pi}Kn}{N}\right)-2\sum\limits_{K=\floor*{\frac{N}{2}}+1}^{N-1}\beta(K)sin\left(\frac{2{\pi}Kn}{N}\right).
\end{equation}
\normalsize
The relation between the coefficients given in \eqref{Ortho_CCPT_Analy} and \eqref{Ortho_CCPT_Analy1} is 
\begin{equation}
\beta_{0ki} = \beta\left(\frac{Nk}{p_i}\right)\ \text{and}\ \beta_{1ki} = \beta\left(\frac{N\hat{k}}{p_i}\right).
\end{equation}
\subsection{Decimation-In-Time FOCCPT (DIT-FOCCPT) Algorithm}
Here,
an $N$ length sequence $x(n)$ is decomposed into successively smaller sub-sequences.
The $N$-point OCCPT of $x(n)$ is computed by combining the OCCPT of these sub-sequences. 
The symmetry properties of CCPSs are exploited in the combining process.
As an initial step, we consider $N=2^v$, $v{\in}\mathbb{N}$ 
(similar to radix-$2$ DIT-FFT \cite{oppenheim1975digital}), 
this allows us to decompose $x(n)$ into two $\frac{N}{2}$ length sequences $h(n)$ and $g(n)$, 
where 
$h(n) = x(2n)$ and $g(n) = x(2n+1)$. 
By using the odd symmetry of $sin(.)$ function, the analysis equation of OCCPT given in \eqref{Ortho_CCPT_Analy1} is modified as
\begin{equation}
\footnotesize
\begin{aligned}
&N\beta(K) = X_x(K) = 
	\sum\limits_{n=0}^{N-1}x(n)cos\left(\frac{2{\pi}Kn}{N}\right),\ {0{\leq}K{\leq}\frac{N}{2}}\\
&N\beta(N-K) = Y_x(K) = \sum\limits_{n=0}^{N-1}x(n)sin\left(\frac{2{\pi}Kn}{N}\right),\ 1{\leq}K{\leq}\frac{N}{2}-1
\end{aligned}.
\normalsize
\label{Ortho_CCPT_Analy2}
\end{equation}
Here $X_x(K+N) = X_x(K) \text{ and } Y_x(K+N) = Y_x(K).$
Moreover, 
$X_x(N-K) = X_x(K)\text{ and }Y_x(N-K) = -Y_x(K)$ over the range of $0$ to $N$.
Now by decomposing $x(n)$ into $h(n)$ and $g(n)$, we obtain
\begin{equation}
\begin{aligned}
X_x(K)& = X_h(K)+ cos\left(\frac{2{\pi}K}{N}\right)X_g(K)
\\&-sin\left(\frac{2{\pi}K}{N}\right)Y_g(K) = N\beta(K),\ 
{0{\leq}K{\leq}\frac{N}{4}}\normalsize{\text{ and}} 
\end{aligned}
\label{mixingeqn1}
\normalsize
\end{equation}
\begin{equation}
\begin{aligned}
Y_x(K)& = Y_h(K)+ cos\left(\frac{2{\pi}K}{N}\right)Y_g(K)\\
&+sin\left(\frac{2{\pi}K}{N}\right)X_g(K) = N\beta(N-K),\ 
{1{\leq}K{\leq}\frac{N}{4}}.
\end{aligned}
\label{mixingeqn2}
\normalsize
\end{equation}
\begin{equation}
\begin{aligned}
\normalsize{\text{Where, }}&X_f(K) = \sum\limits_{n=0}^{\frac{N}{2}-1}f(n)cos\left(\frac{2{
\pi}Kn}{\frac{N}{2}}\right),\ 0{\leq}K{\leq}\frac{N}{4}\\
\normalsize{\text{and }}
&Y_f(K) = \sum\limits_{n=0}^{\frac{N}{2}-1}f(n)sin\left(\frac{2{
\pi}Kn}{\frac{N}{2}}\right),\ 1{\leq}K{\leq}\frac{N}{4}-1\\
\end{aligned}
\end{equation}
\normalsize
represents the $\frac{N}{2}$-point OCCPT of $f(n)$, here $f(n)$ can be either $h(n)$ or $g(n)$.
Here, 
$X_x(K)$ is computed for $0{\leq}K{\leq}\frac{N}{4}$, since the range of $X_h(K)$ and $X_g(K)$ is $0{\leq}K{\leq}\frac{N}{4}$.
The remaining $\frac{N}{4}$ coefficients of $X_x(K)$ are computed using the symmetry property of $X_h(K)$, $X_g(K)$ and $Y_g(K)$, i.e.,
\begin{equation}
\begin{aligned}
&X_x\left(\frac{N}{2}-K\right) = X_h(K)- cos\left(\frac{2{\pi}K}{N}\right)X_g(K)
\\&+sin\left(\frac{2{\pi}K}{N}\right)Y_g(K) = N\beta\left(\frac{N}{2}-K\right),\ 
{0{\leq}K{\leq}\frac{N}{4}-1}
\end{aligned}.
\label{mixingeqn3}
\end{equation}
\normalsize
Similarly, 
the remaining coefficients of $Y_x(K)$ are computed using the symmetry property of $Y_h(K)$, $Y_g(K)$ and $X_g(K)$, i.e.,
\begin{equation}
\begin{aligned}
Y_x&\left(\frac{N}{2}-K\right) = -Y_h(K)+ cos\left(\frac{2{\pi}K}{N}\right)Y_g(K)\\
&+sin\left(\frac{2{\pi}K}{N}\right)X_g(K) = N\beta\left(\frac{N}{2}+K\right),\ 
{1{\leq}K{\leq}\frac{N}{4}-1}
\end{aligned}.
\label{mixingeqn4}
\end{equation}
\normalsize
Note that in \eqref{mixingeqn1}, \eqref{mixingeqn2} and \eqref{mixingeqn3} we have the terms $Y_h(K)$ and $Y_g(K)$, with possible $K$ values to be $0$ (or) $\frac{N}{4}$, but the actual range of $Y_h(K)$ and $Y_g(K)$ is $1{\leq}K{\leq}\frac{N}{4}-1$. Moreover, $Y_h(K) = Y_g(K) = 0$ whenever $K=0$ (or) $\frac{N}{4}$. 
So, the equations \eqref{mixingeqn1}, \eqref{mixingeqn2} and \eqref{mixingeqn3} can be further reduced as follows:
\begin{equation}
X_x(K) = \begin{cases}
\begin{aligned}
&X_h(K)+cos\left(\frac{2{\pi}K}{N}\right)X_g(K),\ \text{if }K=0\text{ (or) }\frac{N}{4}\\
	&X_h(K)+cos\left(\frac{2{\pi}K}{N}\right)X_g(K)\\&{\qquad}-sin\left(\frac{2{\pi}K}{N}\right)Y_g(K),\ 
	\text{if }1{\leq}K{\leq}\frac{N}{4}-1
	\end{aligned}
\end{cases}
\label{mixingeqn5}
\end{equation}
\normalsize 

\begin{equation}
X_x\left(\frac{N}{2}-K\right) = \begin{cases}
\begin{aligned}
&X_h(K)-cos\left(\frac{2{\pi}K}{N}\right)X_g(K),\ \text{if }K=0\\
	&X_h(K)-cos\left(\frac{2{\pi}K}{N}\right)X_g(K)
	\\&\ +sin\left(\frac{2{\pi}K}{N}\right)Y_g(K),\ 
	\text{if }1{\leq}K{\leq}\frac{N}{4}-1
	\end{aligned}
\end{cases}
\label{mixingeqn6}
\end{equation}
\normalsize

and
\begin{equation}
Y_x(K) = \begin{cases}
\begin{aligned}
Y_h(K)&+cos\left(\frac{2{\pi}K}{N}\right)Y_g(K)\\
&+sin\left(\frac{2{\pi}K}{N}\right)X_g(K),\ \text{if }1{\leq}K{\leq}\frac{N}{4}-1\\
sin&\left(\frac{2{\pi}K}{N}\right)X_g(K),\quad \text{if }K = \frac{N}{4}
\end{aligned}
\end{cases}
\label{mixingeqn8}
\end{equation}
\normalsize
The equations \eqref{mixingeqn4}-\eqref{mixingeqn8} correspond to the decomposition of original $N$-point OCCPT into two $\frac{N}{2}$-point OCCPT computations. 
As $N=2^v$, we can further decompose each $\frac{N}{2}$-point OCCPT into two $\frac{N}{4}$-point OCCPTs. 
This process is repeated for $v = log_2(N)$ times.
Fig. \ref{The flow graph of complete DIT decomposition of an $8$-point OCCPT.} depicts the complete DIT decomposition flow graph of an $8$-point OCCPT computation.
\begin{figure}
\begin{adjustbox}{max width=\textwidth}
\renewcommand{\arraystretch}{1.5}
\scalebox{0.53}{
\tikzset{%
    block/.style    = {draw, thick, rectangle, minimum height = 3em,
        minimum width = 3em},
    gain/.style     = {draw, thick, isosceles triangle, minimum height = 3em,
        isosceles triangle apex angle=60},
    port/.style     = {inner sep=0pt, font=\tiny},
    sum/.style n args = {4}{draw, circle, node distance = 2cm, minimum size=6mm, alias=sum,
        append after command={
            node at (sum.north) [port, below=6pt] {$#1$}
            node at (sum.west) [port, right=6pt] {$#2$}
            node at (sum.south) [port, above=6pt] {$#3$}
            node at (sum.east) [port, left=6pt] {$#4$}
        },
    }, 
    joint/.style    = {circle, draw, fill, inner sep=0pt, minimum size=2pt},
    input/.style    = {coordinate}, 
    output/.style   = {coordinate} 
}
\newcommand{\suma}{\Large$+$}
\newcommand{\inte}{$\displaystyle \int$}
\newcommand{\derv}{\huge$\frac{d}{dt}$}
\begin{circuitikz} [color=black, thick][size=0.125]
\draw (-1,2) to [short,o-o] (-.5,2);
    \draw[-latex] (-.5,2) -- (.5,2) node[below]{$cos\left(\frac{2\pi(1)}{2}\right)$};
    \draw (-.5,2) to [short,o-o] (1.5,2) node[right=4pt, above=1pt]{};
    \draw[-latex] (-.5,2) -- (1.25,3.75) node[below] {};   
     \draw[-latex] (-.5,4) -- (0,3.5) node[below] {}; 
     \draw[-latex] (-.5,4) -- (0,4) node[below] {};

\draw (-1,4) to [short,o-o] (-.5,4);
   \draw (-.5,4) to [short,o-o] (1.5,4) node[right=4pt, above=1pt]{};

    \draw (-.5,2) to [short,o-o] (1.5,4);
    
    \draw (-.5,4) to [short,o-o] (1.5,2);

\draw (-1,6) to [short,o-o] (-.5,6) ;
    \draw[-latex] (-.5,6) -- (.5,6) node[below] {$cos\left(\frac{2\pi(1)}{2}\right)$};
    \draw (-.5,6) to [short,o-o] (1.5,6) node[right=4pt, above=1pt]{};
  
    \draw (-.5,8) to [short,o-o] (1.5,8) node[right =4pt, above=1pt]{};
\draw (-1,8) to [short,o-o] (-.5,8);

    \draw (-.5,6) to [short,o-o] (1.5,8); 
       
    \draw (-.5,8) to [short,o-o] (1.5,6);
     \draw[-latex] (-.5,6) -- (1.25,7.75) node[below] {};   
     \draw[-latex] (-.5,8) -- (0,7.5) node[below] {}; 
     \draw[-latex] (-.5,8) -- (0,8) node[below] {};

\draw (-1,0) to [short,o-o] (-.5,0);

    \draw (-.5,0) to [short,o-o] (1.5,0) node[right=4pt, above=1pt]{};

\draw (-1,-2) to [short,o-o] (-.5,-2);
    \draw[-latex] (-.5,-2) -- (.5,-2) node[below]{$cos\left(\frac{2\pi(1)}{2}\right)$};
    \draw (-.5,-2) to [short,o-o] (1.5,-2) node[right=4pt, above=1pt]{};
    \draw[-latex] (-.5,-2) -- (1.25,-0.25) node[below] {};   
     \draw[-latex] (-.5,0) -- (0,-.5) node[below] {}; 
     \draw[-latex] (-.5,0) -- (0,0) node[below] {};

    \draw (-.5,0) to [short,o-o] (1.5,-2);
   
    \draw (-.5,-2) to [short,o-o] (1.5,0);

\draw (-1,-4) to [short,o-o] (-.5,-4);
   \draw (-.5,-4) to [short,o-o] (1.5,-4) node[right=4pt, above=1pt]{};

    \draw[-latex] (-.5,-6) -- (.5,-6) node[below] {\textbf{$cos\left(\frac{2\pi(1)}{2}\right)$}};
    \draw (-.5,-6) to [short,o-o] (1.5,-6) node[right=4pt, above=1pt]{};
\draw (-1,-6) to [short,o-o] (-.5,-6) ;
    \draw[-latex] (-.5,-6) -- (1.25,-4.25) node[below] {};   
     \draw[-latex] (-.5,-4) -- (0,-4.5) node[below] {}; 
     \draw[-latex] (-.5,-4) -- (0,-4) node[below] {};

    \draw (-.5,-4) to [short,o-o] (1.5,-6);
    
    \draw (-.5,-6) to [short,o-o] (1.5,-4);
\draw (1.5,-6) to [short,o-o] (2,-6);
    \draw[-latex] (2,-6) -- (3,-6) node[below] {\textbf{$sin\left(\frac{2{\pi}(1)}{4}\right)$}};
    \draw (2,-6) to [short,o-o] (6.5,-6) node[right=8pt, above=1pt]{};

\draw (2,-6) -- (2,-5.5) node[] {};
    \draw[-latex] (2,-5.5) -- (3,-5.5) node[above] {$cos\left(\frac{2{\pi}(1)}{4}\right)$};
    \draw (2,-5.5) to [short,-o] (4,-5.5);

    \draw[-latex] (4,-4) -- (5.5,-4) node[below] {-1};
    \draw (4,-4) to [short,o-o] (6.5,-4) node[right=8pt, above=1pt]{};
\draw (1.5,-4) to [short,o-o] (4,-4);
\draw [-latex] (1.5,-4) --(3,-4) node[above]{$cos\left(\frac{2{\pi}(0)}{4}\right)$}; 

    \draw (4,-5.5) to [short,o-o] (6,-2);
    
    \draw (4,-4) to [short,o-o] (6,0);

\draw (1.5,-2) to [short,o-o] (6.5,-2) node[right=8pt, above=1pt]{};

    \draw (1.5,0) to [short,o-o] (6,0) node[right=8pt, above=1pt]{};

    \draw (4,0) to [short,o-o] (6,-4);  
\draw (1.5,2) to [short,o-o] (2,2);
    \draw[-latex] (2,2) -- (3,2) node[below] {$sin\left(\frac{2{\pi}(1)}{4}\right)$};
    \draw (2,2) to [short,o-o] (8.5,2) node[right=8pt, above=1pt]{};

\draw (2,2) -- (2,2.5) node[] {};
    \draw[-latex] (2,2.5) -- (3,2.5) node[above] {$cos\left(\frac{2{\pi}(1)}{4}\right)$};
    \draw (2,2.5) to [short,-o] (4,2.5);

    \draw[-latex] (4,4) -- (5.5,4) node[below] {-1};
     \draw (4,4) to [short,-o] (6,4);
    \draw (6,4) to [short,o-o] (12.5,4) node[right=8pt, above=1pt]{};
\draw (1.5,4) to [short,o-o] (4,4);
\draw [-latex] (1.5,4) --(3,4) node[above]{$cos\left(\frac{2{\pi}(0)}{4}\right)$}; 

    \draw (4,2.5) to [short,o-o] (6,6) node[right=8pt, above=1pt]{};
    
    \draw (4,4) to [short,o-o] (6,8) node[right=8pt, above=1pt]{};
    \draw [-latex] (4,4) --(5.75,7.5) node[above]{}; 
    \draw [-latex] (4,2.5) --(5.75,5.5625) node[above]{}; 
    \draw[-latex] (4,8) -- (4.25,7.5) node[below] {};

     \draw[-latex] (4,8) -- (4.5,8) node[below] {};
     \draw[-latex] (4,0) -- (4.5,0) node[below] {};
     \draw [-latex] (4,-5.5) --(5.75,-2.4375) node[above]{};
     \draw[-latex] (4,-4) -- (5.75,-.5) node[below] {}; 
     \draw[-latex] (4,0) -- (4.25,-.5) node[below] {};

\draw (1.5,6) to [short,o-o] (8.5,6);

\draw (1.5,8) to [short,o-o] (4,8);
\draw (4,8) to [short,o-o] (6,8);
\draw (6,8) to [short,o-o] (8.5,8);
    \draw (4,8) to [short,o-o] (6,4);

    \draw[-latex] (8.5,2) -- (9,2) node[below] {-1};
    \draw (8.5,2) to [short,o-o] (9.75,2);
        \draw[-latex] (9.75,2) -- (11.8,-1.6444) node[below] {};
         \draw (9.75,2) to [short,o-o] (12,-2); 
                
       \draw (12,4) to [short,o-o] (12.5,4);

    \draw[-latex] (8.5,2) -- (11.8,-5.5428) node[] {};
     \draw (8.5,2) to [short,o-o] (12,-6);

   \draw[-latex] (8.5,6) -- (11.7,6) node[below] {};  

    \draw (8.5,8) to [short,o-o] (12,8);
\draw[-latex] (8.5,8) -- (9,8) node[below] {}; 
\draw[-latex] (8.5,4) -- (9,4) node[below] {}; 
\draw[-latex] (8.5,0) -- (11.75,7.4286) node[below] {}; 
\draw[-latex] (8.5,-3.5) -- (11.75,3.4642) node[below] {};

    \draw (12,8) to [short,o-o] (12.5,8);
  
    \draw[-latex] (8.5,6) -- (11.8,2.2285) node[] {};  
    \draw (8.5,6) to [short,o-o] (12,2);
      
    \draw (8.5,8) to [short,o-o] (12,0);

\draw (6,0) to [short,o-o] (8.5,0);
\draw[-latex] (6,0) -- (7.5,0) node[above] {$cos\left(\frac{2{\pi}(0)}{8}\right)$};
    \draw[-latex] (6,0) -- (9.75,0) node[below] {-1};
    \draw (8.5,0) to [short,o-o] (12,0);
    \draw (12,0) to [short,o-o] (12.5,0);
     \draw (8.5,6) to [short,o-o] (12,6);
     \draw (8.5,-2) to [short,o-o] (12,-2);
     \draw (8.5,-6) to [short,o-o] (12,-6);
	 \draw (12,6) to [short,o-o] (12.5,6);
	 \draw (12,2) to [short,o-o] (12.5,2);
     \draw (12,-2) to [short,o-o] (12.5,-2);
	 \draw (12,-6) to [short,o-o] (12.5,-6);

\draw[-latex] (8.5,8) -- (8.75,7.4285) node[below] {};

\draw (6.5,-2) to [short,o-o] (8.5,-2);

\draw [-latex] (7.5,-2) --(11.75,-2) node[above]{};      

\draw [-latex] (8.5,-2) --(11.8,-5.7714) node[above]{}; 
 \draw (8.5,-2) to [short,o-o] (12,-6);    

 \draw (8.5,-1.5) to [short,o-o] (9.75,-1.5);
     \draw[-latex] (8.5,-1.5) -- (9,-1.5) node[below] {-1};
  
 \draw (8.5,-5.5) to [short,o-o] (9.75,-5.5); 
     \draw[-latex] (8.5,-5.5) -- (9.5,-5.5) node[below] {-1};
     
      \draw[-latex] (9.75,-5.5) -- (11.75,4.7222) node[below] {};
		\draw (9.75,-5.5) to [short,o-o] (12,6);      
      
      \draw[-latex] (9.75,-1.5) -- (11.8,1.6888) node[below] {};
       \draw (9.75,-1.5) to [short,o-o] (12,2); 
      
		\draw[-latex] (8.5,-5.5) -- (11.75,1.4642) node[] {};
 		\draw (8.5,-5.5) to [short,o-o] (12,2); 		
		
    \draw (6.5,-2) -- (6.5,-1.5) node[] {};
    \draw[-latex] (6.5,-1.5) -- (7.5,-1.5) node[above] {$cos\left(\frac{2{\pi}(1)}{8}\right)$};
    \draw (6.5,-1.5) to [short,-o] (8.5,-1.5);
    
      \draw[-latex] (6.5,-2) -- (7.5,-2) node[below] {$sin\left(\frac{2{\pi}(1)}{8}\right)$};
    \draw (6.5,-4) -- (6.5,-3.5) node[] {};
    \draw[-latex] (6.5,-3.5) -- (7.5,-3.5) node[above] {$cos\left(\frac{2{\pi}(2)}{8}\right)$};
    \draw (6.5,-3.5) to [short,-o] (8.5,-3.5);
    
      \draw[-latex] (6.5,-4) -- (7.5,-4) node[below] {$sin\left(\frac{2{\pi}(2)}{8}\right)$};
    
    \draw (6.5,-6) -- (6.5,-5.5) node[] {};
    \draw[-latex] (6.5,-5.5) -- (7.5,-5.5) node[above] {$sin\left(\frac{2{\pi}(1)}{8}\right)$};
    \draw (6.5,-5.5) to [short,-o] (8.5,-5.5);
    
      \draw[-latex] (6.5,-6) -- (7.5,-6) node[below] {$cos\left(\frac{2{\pi}(1)}{8}\right)$};

    \draw (8.5,0) to [short,o-o] (12,8);

\draw[-latex] (8.5,-1.5) -- (11.5,4.9285) node[] {};
\draw (8.5,-1.5) to [short,o-o] (12,6);

    \draw (6,-4) to [short,o-o] (12.5,-4);
       
    \draw[-latex] (8.5,-6) -- (11.7,-6) node[below] {};
    
\draw (6.5,-6) to [short,o-o] (8.5,-6);

    \draw (8.5,-3.5) to [short,o-o] (12,4);
    
  \draw[-latex] (8.5,-6) -- (11.8,-2.2285) node[] {};
   \draw (8.5,-6) to [short,o-o] (12,-2); 
  
\draw (12.3,-2) to [short,-o] (12.5,-2);    
\draw (12.3,-6) to [short,-o] (12.5,-6);    
\draw (12.3,2) to [short,-o] (12.5,2);   
 \draw (12.3,6) to [short,-o] (12.5,6);

\draw (-1,2) node [left]{\textbf{$x(6)$}};   
\draw (-1,4) node [left]{\textbf{$x(2)$}};  
\draw (-1,6) node [left]{\textbf{$x(4)$}};     
\draw (-1,8) node [left]{\textbf{$x(0)$}};
\draw (-1,0) node [left]{\textbf{$x(1)$}};         
\draw (-1,-2) node [left]{\textbf{$x(5)$}};  
\draw (-1,-4) node [left]{\textbf{$x(3)$}};  
\draw (-1,-6) node [left]{\textbf{$x(7)$}}; 
\draw (12.5,8) node [right]{\textbf{$N\beta(0)$}};  
\draw (12.5,6) node [right]{\textbf{$N\beta(1)$}};  
\draw (12.5,4) node [right]{\textbf{$N\beta(2)$}};
\draw (12.5,2) node [right]{\textbf{$N\beta(3)$}};  
\draw (12.5,0) node [right]{\textbf{$N\beta(4)$}};  
\draw (12.5,-2) node [right]{\textbf{$N\beta(5)$}}; 
\draw (12.5,-4) node [right]{\textbf{$N\beta(6)$}};  
\draw (12.5,-6) node [right]{\textbf{$N\beta(7)$}}; 
\end{circuitikz}}
\end{adjustbox}
\caption{Flow graph of complete DIT decomposition of an $8$-point OCCPT computation.}
\label{The flow graph of complete DIT decomposition of an $8$-point OCCPT.}
\end{figure}
\subsection{Computational Complexity of N-Point DIT-FOCCPT}
Initially, we count the number of multiplications and additions required for $v^{th}$ stage (final stage), i.e., computing the $N$-point OCCPT from two $\frac{N}{2}$-point OCCPTs using \eqref{mixingeqn4}-\eqref{mixingeqn8}. 
Notice that in both \eqref{mixingeqn5} and \eqref{mixingeqn6}, we have the same multiplication terms with a difference in addition/subtraction operations. 
So, to find the number of multiplications, it is sufficient to consider one equation with a maximum variation range of $K$. 
Hence we consider \eqref{mixingeqn5} and it requires $2\left(\frac{N}{4}-1\right)+2 = \frac{N}{2}$ multiplications.
Whereas for additions we have to consider both the equations. They require $4\left(\frac{N}{4}-1\right)+3 = N-1$ additions.
Similarly, both  \eqref{mixingeqn4} and \eqref{mixingeqn8} require $2\left(\frac{N}{4}-1\right)+1 = \frac{N}{2}-1$ multiplications and $4\left(\frac{N}{4}-1\right) = N-4$ additions. 
As a result, we have the following for $v^{th}$ stage:

$\qquad\qquad\quad M_v = N-1,\ A_v = 2N-5,$
\newline
where $M_i$ and $A_i$ denote the number of multiplications and additions required in $i^{th}$ stage respectively.
Likewise, there are two $\frac{N}{2}$-point OCCPTs  computation from four $\frac{N}{4}$-point OCCPTs in $(v-1)^{th}$ stage. Hence

$\qquad\quad M_{v-1} = 2\left(\frac{N}{2}-1\right),\
A_{v-1}= 2\left(2\left(\frac{N}{2}\right)-5\right).$
\newline
Proceeding further, the first stage requires computation of $2$-point OCCPT for $\frac{N}{2}$ times. 
Here each $2$-point OCCPT requires one multiplication and two additions.
Therefore

$\qquad\qquad M_2 = \frac{N}{2}(2-1) = \frac{N}{2},\ A_2 = \frac{N}{2}(2) = N.$
\newline
Now, by combining each stage complexity
 we can count the total number of multiplications ($M_{total}$) and additions ($A_{total}$) required for $N$-point OCCPT, i.e.,
\begin{equation}
\nonumber
\footnotesize
\begin{aligned}
M_{total} &= 1(N-1)+2\left(\frac{N}{2}-1\right)+\dots+\frac{N}{4}(4-1)+\frac{N}{2}(2-1)\\
& = vN-(N-1) = Nlog_2(N)-N+1,\\
A_{total} &= 1(2N-5)+2\left(2\left(\frac{N}{2}\right)-5\right)+\dots+\frac{N}{4}\left(2(4)-5\right)+N\\
& = 2vN-5\left(\frac{N}{2}\right)+5-N = 2Nlog_2(N)-7\left(\frac{N}{2}\right)+5.
\end{aligned}
\normalsize
\end{equation}
Therefore, a given $x(n){\in}\mathbb{R}^N$ requires $Nlog_2(N)-N+1$ real multiplications and $2Nlog_2(N)-7\left(\frac{N}{2}\right)+5$ real additions for computing $N$-point OCCPT using DIT-FOCCPT. If $x(n){\in}\mathbb{C}^N$, then it requires $2Nlog_2(N)-2N+2$ real multiplications and $4Nlog_2(N)-7N+10$ real additions, as OCCPT is a linear transform.
\subsection{Comparison of Computational Complexity Between Different Transforms}
For a given $x(n){\in}\mathbb{C}^N$, the $N$-point OCCPT and DFT are computed using FOCCPT and FFT respectively, when $N=2^v,\ v{\in}\mathbb{N}$.
If $N{\neq}2^v$, both are computed 
using the direct method, even though there exist fast computation algorithms for DFT \cite{Burrus,Singleton}.
Whereas CCPT$^{(1)}$ and CCPT$^{(2)}$ are computed using direct method for both $N = 2^v$ and $N{\neq}2^v$ cases due to the non-orthogonality of transformation matrices $\mathbf{C}$ and $\mathbf{D}$. 
Even for RPT, we use the direct method for both cases. But if $N=2^v$, the RPT matrix is a sparse orthogonal matrix. To the best of our knowledge, there is no fast computation algorithm for computing RPT in the literature.

If $x{\in}\mathbb{C}^N$ and $y{\in}\mathbb{C}^N$, then the number of real multiplications and additions required for computing $\left<x,y\right>$ is $4N$ and $4N-2$ respectively.
Similarly, if $x{\in}\mathbb{C}^N$ and $y{\in}\mathbb{R}^N$, then it requires $2N$ and $2(N-1)$ real multiplications and additions respectively.
Using this, the computational complexity (\textit{in terms of the number of real multiplications and additions}) required for different transforms for a given $x(n){\in}\mathbb{C}^N$ is tabulated in Table \ref{tab:Comparison of computational complexity between different transformation techniques}. There are a few points we can summarize from Table \ref{tab:Comparison of computational complexity between different transformation techniques}:
\begin{itemize}
\item If $N=2^v$, then FOCCPT results in a
reduction of $2N-2$ real multiplications with an increase of $Nlog_2(N)-7N+10$ real additions over FFT.  Here the addition complexity becomes significant for a large value of $N$ ($N>2^8$).
\item If $N=2^v$ and $x(n){\in}\mathbb{R}^N$,
then FFT requires $2Nlog_2(N)$ real multiplications and $3Nlog_2(N)$ real additions. 
This implies FOCCPT has approximately $50\%$ reduction in computational complexity over FFT.
\item  The number of multiplications/additions required for the proposed transforms and RPT is approximately $50\%$ less when compared with DFT using the direct method.
\end{itemize}

Note that, both CCPT$^{(1)}$ and CCPT$^{(2)}$ require some additional complexity to find $\mathbf{C}^{-1}$ and $\mathbf{D}^{-1}$ respectively, along with the complexity given in Table \ref{tab:Comparison of computational complexity between different transformation techniques}.
This additional complexity is required even for RPT when $N{\neq}2^v$.
In this paper, both CCPT$^{(1)}$ and CCPT$^{(2)}$ are studied in brief.
A complete study of these transforms, especially the importance of circular shift operation in the matrices is one of our future works.
The proposed orthogonal and non-orthogonal transforms (CCPTs) may find applications in communication, image processing, control applications, etc., \cite{7263349,7842433,503278,8626481,8544037}. 
\begin{table}[h]
\centering
\centering
\footnotesize
\caption{C\scriptsize{OMPARISON OF COMPUTATIONAL COMPLEXITY BETWEEN DIFFERENT TRANSFORMATION TECHNIQUES}}
\label{tab:Comparison of computational complexity between different transformation techniques}
\begin{adjustbox}{max width=\textwidth}
\renewcommand{\arraystretch}{1.5}
\scalebox{0.7}{
\begin{tabular}{|M{1.6cm}|M{1.9cm}|M{2cm}|M{2.1cm}|M{2cm}|} \hline
\bfseries{\small{}} & 
\multicolumn{2}{|c|}{\bfseries{\footnotesize{If $N{\neq}2^{v}$}}} &  
\multicolumn{2}{|c|}{\bfseries{\footnotesize{If $N=2^{v}$}}} \\ \cline{2-5}
 & \bfseries{\footnotesize{\makecell{Number of Real\\ Multiplications}}} & \bfseries{\footnotesize{\makecell{Number of\\Real Additions}}} & \bfseries{\footnotesize{\makecell{Number of Real \\ Multiplications}}} & \bfseries{\footnotesize{\makecell{Number of\\Real Additions}}} \\ \hline 
\small{CCPT$^{(1)}$} &	$2{N^2}$	&	$2{N^2}-2N$ &	$2{N^2}$	&	$2{N^2}-2N$	\\	\hline
\small{CCPT$^{(2)}$} &	$2{N^2}$	&	$2{N^2}-2N$ &	$2{N^2}$	&	$2{N^2}-2N$	\\	\hline
\small{Orthogonal CCPT} & $2{N^2}$	&	$2{N^2}-2N$ & $2Nlog_2(N)-2N+2$	&	$4Nlog_2(N)-7N+10$ \\    \hline
\small{DFT} & 	$4{N^2}$	&	$4{N^2}-2N$ & $2Nlog_2(N)$	&	$3Nlog_2(N)$\\    \hline
\small{RPT} &	$2{N^2}$	&	$2{N^2}-2N$ &	$2{N^2}$	&	$2{N^2}-2N$	\\	\hline
\end{tabular}}
\end{adjustbox}
\normalsize
\end{table}

If $x(n){\in}\mathbb{R}^N$, then DFT coefficients follow the symmetry property, i.e., for each discrete frequency, we have
four real coefficients, whereas CCPT has only two coefficients. This results in a \textit{non-redundant representation}.



\textit{Remark 2:} In literature, there exist methods based on a DIT algorithm for fast $N$-point DFT computation, whenever $N{\neq}2^v$ \cite{oppenheim1975digital,Burrus,Singleton}. In a similar way, extending the proposed DIT-FOCCPT for $N{\neq}2^v$ is one of our future works. 

\textit{Remark 3:} 
There exists a rich literature \cite{winograd,Duhamel,Heideman,4034175}  for fast computation of DFT, with complexity less than DIT-FFT complexity. 
So, further reducing the complexity of OCCPT to less than the DIT-FOCCPT complexity is another open problem, which needs to be addressed.
%

\textit{Notation}: Now onwards, a general CCPT can be either CCPT$^{(1)}$ (or) CCPT$^{(2)}$ (or) OCCPT.
\section{Period and Frequency Estimation}
For a given $N$ length signal, the possible period can be a divisor (or) non-divisor of $N$. In the following subsections, we explore how CCPTs are used to estimate these periods.
\subsection{Divisor Period and Frequency Estimation}
As the CCPT matrix is an NPM, it estimates the period and hidden periods of a signal which are divisors of the signal length.
We illustrate the same by considering two $54$ length periodic sequences $\hat{x}_1(n)$ and $\hat{x}_2(n)$. The period of $\hat{x}_1(n)$ is $18$ and the period of $\hat{x}_2(n)$ is $40$.
In this example, both $\hat{x}_1(n)$ and $\hat{x}_2(n)$ can be periodically decomposed as follows:
\begin{equation}
\hat{x}_1(n) = x_{11}(n)+x_{12}(n)\  \text{and}\ \hat{x}_2(n) = x_{21}(n)+x_{22}(n),
\label{testSignal}
\end{equation}
where $x_{11}(n)$, $x_{12}(n)$, $x_{21}(n)$ and $x_{22}(n)$ are periodic signals with periods $9$, $18$, $5$ and $8$ respectively.
These periods are known as \textit{hidden periods} \cite{7094814}.
Here $x_{11}(n)$, $x_{21}(n)$ are the two periodic random signals, whose one period data is generated by following $\mathcal{N}(0,1)$, $x_{12}(n)=0.6cos\left(2{\pi}100\left(\frac{n}{360}\right)+\frac{\pi}{3}\right)$ and $x_{22}(n)=0.3cos\left(2{\pi}45\left(\frac{n}{360}\right)+\frac{\pi}{4}\right)$. 
Moreover a zero mean Gaussian noise with $SNR=6\ dB$ is added to both $\hat{x}_1(n)$ and $\hat{x}_2(n)$ to generate $x_1(n)$ and $x_2(n)$ respectively.


The three CCPT coefficients are computed for both $x_1(n)$ and $x_2(n)$. Fig. \ref{f1}(a)-(c) and Fig. \ref{f1}(d)-(f) depict the strength of each divisor period present in $x_1(n)$ and $x_2(n)$ respectively.
The strength of a period $p_i$ is computed by taking the square sum of $\varphi(p_i)$ transform coefficients that belong to $s_{p_i}$. 
From Fig. \ref{f1}(a)-(c), the significant periods present in $x_1(n)$ are $3$, $9$ and $18$.
Hence, the period of $x_1(n)$ is equal to $lcm(3,9,18)$.
While from Fig. \ref{f1}(d)-(f), the period of $x_2(n)$ is equal to $54$, as $s_{54}$ has significant period strength.
So, the proposed transforms failed to estimate the period of $x_2(n)$, as
$40{\nmid}54$.

Moreover, CCPT estimates the frequency information of a signal.
Fig. \ref{f1}(g)-(i) depicts the absolute values of transform coefficients that belong to $s_{18}$, computed for $x_1(n)$. 
Note that there are only two significant non-zero coefficients in $s_{18}$.
Because, in CCPT each $\varphi(p_i)$ dimensional $s_{p_i}$ is further decomposed as $\frac{\varphi(p_i)}{2}$ orthogonal CCSs, where each CCS $v_{p_i,k_i}$ consists of signals with frequency $\frac{2{\pi}{k_i}}{p_i}$. 
Now, in this example $s_{18}$ is decomposed as $s_{18} = v_{18,1}{\oplus}v_{18,5}{\oplus}v_{18,7}$. 
Since the sampling frequency is $360 Hz$, the CCSs $v_{18,1}$, $v_{18,5}$ and $v_{18,7}$ consist of signals with frequencies $20 Hz$, $100 Hz$ and $140 Hz$ accordingly. 
As the given $x_{12}(n)$ is a $100 Hz$ signal, the coefficients $15$ and $16$, which belong to ${v_{18,5}}$ are having significant strength. 
Along with this, we can find out the phase information using OCCPT coefficients as given in (\ref{Relation_with_DFT1}).
From Fig. \ref{f1}(i), the two significant coefficient values are $0.149$ and $-0.261$, then 
$-tan^{-1}\left(\frac{-0.261}{0.149}\right) = 1.05\approx \frac{\pi}{3}=1.04\ rad$.
\begin{figure}[!h]
\centering
 \includegraphics[width=5.5in,height=3.4in]{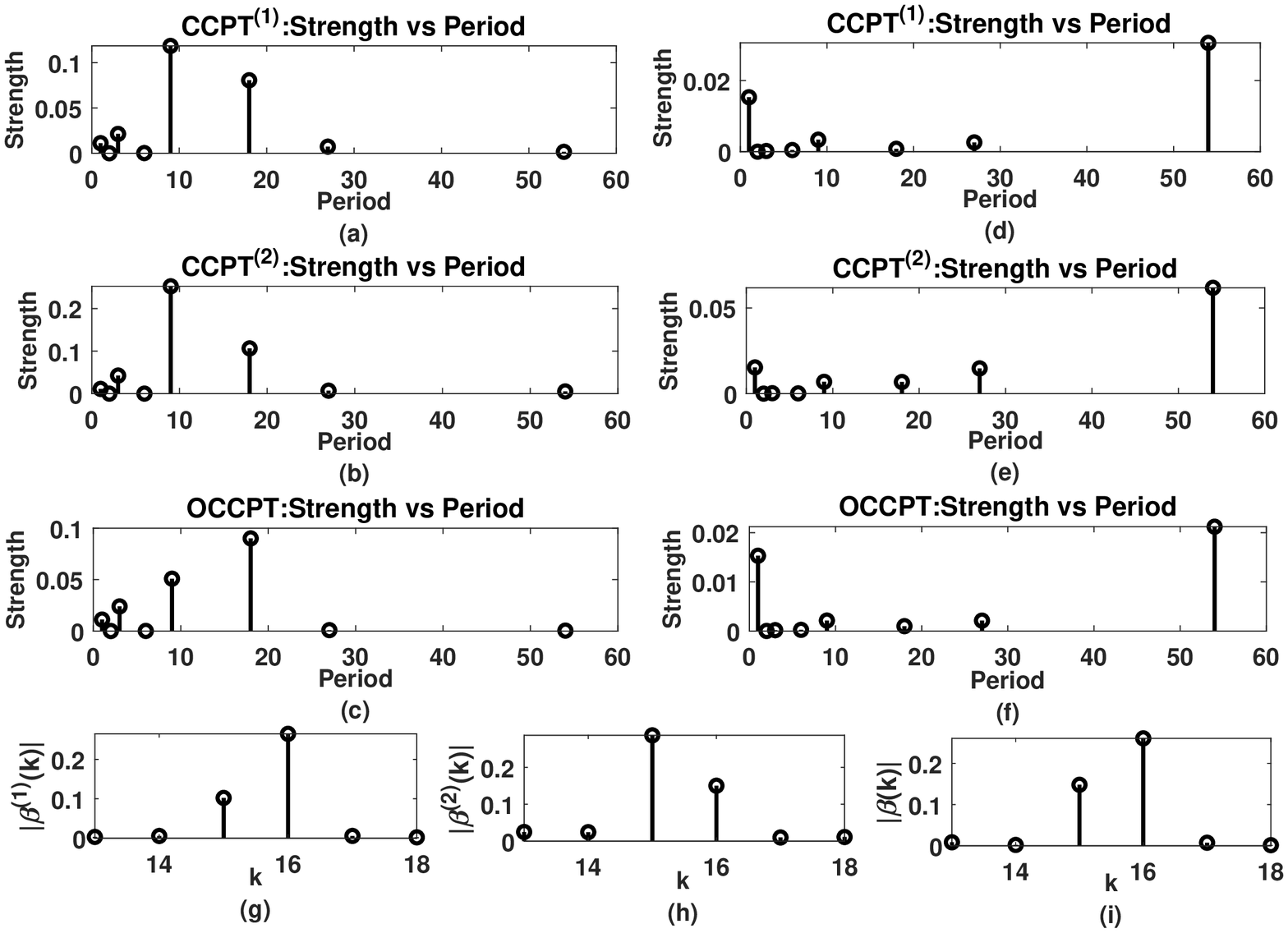} 
\caption[\small{(a)-(c) Strength vs Period plots obtained from the three CCPT coefficients of $x_1(n)$. (d)-(f) Strength vs Period plots obtained from the three CCPT coefficients of $x_2(n)$. (g)-(i) Absolute values of 
CCPT$^{(1)}$, CCPT$^{(2)}$ and OCCPT 
coefficients that belong to $s_{18}$ computed for $x_1(n)$.}]{\small{(a)-(c) Strength vs Period plots obtained from the three CCPT coefficients of $x_1(n)$. (d)-(f) Strength vs Period plots obtained from the three CCPT coefficients of $x_2(n)$. (g)-(i) Absolute values of 
CCPT$^{(1)}$, CCPT$^{(2)}$ and OCCPT 
coefficients that belong to $s_{18}$, computed for $x_1(n)$.}}
\label{f1}
\end{figure}

Therefore, using the proposed transforms we can estimate the divisor period, frequency and phase information of a signal.
Even DFT works for the same purpose, while RPT gives only the divisor period information \cite{Shah}.

Now, we evaluate the period estimation performance of different transforms (DFT, RPT, and CCPTs) in the presence of noise. 
By varying SNRs of $x_1(n)$, we 
compute the strength of each divisor period 
using different transforms.
Consider the periods with significant strength, by keeping $20\%$ of maximum strength as a threshold. 
The obtained results are tabulated in 
Table \ref{tab:EVALUATING THE PERIOD ESTIMATION PERFORMANCE OF DIFFERENT TRANSFORMATION TECHNIQUES IN THE PRESENCE OF NOISE}.
From the table, we claim that the performance of OCCPT is quite good in the presence of noise and it is comparable with DFT and RPT, whereas both CCPT$^{(1)}$ and CCPT$^{(2)}$ are sensitive to noise.
\begin{table}[h]
\centering
\centering
\footnotesize
\caption{E\scriptsize{VALUATING THE PERIOD ESTIMATION PERFORMANCE OF DIFFERENT TRANSFORMATION TECHNIQUES IN THE PRESENCE OF NOISE}}
\label{tab:EVALUATING THE PERIOD ESTIMATION PERFORMANCE OF DIFFERENT TRANSFORMATION TECHNIQUES IN THE PRESENCE OF NOISE}
\begin{adjustbox}{max width=\textwidth}
\renewcommand{\arraystretch}{1.5}
\scalebox{0.67}{
\begin{tabular}{|M{1cm}|M{0.75cm}|M{0.75cm}|M{1.2cm}|M{1.2cm}|M{1.4cm}|M{1.45cm}|M{1.45cm}|} \hline
\bfseries{\small{SNR (dB)}}  
 & \bfseries{\footnotesize{\makecell{6}}} & \bfseries{\footnotesize{\makecell{3}}} & \bfseries{\footnotesize{\makecell{0}}} &
 \bfseries{\footnotesize{\makecell{-3}}} &
 \bfseries{\footnotesize{\makecell{-6}}} &
 \bfseries{\footnotesize{\makecell{-9}}} &
 \bfseries{\footnotesize{\makecell{-12}}} 
 \\ \hline 
 
\small{DFT} & \small{3,9,18}	&	\small{3,9,18}	&	\small{3,9,18} & \small{3,9,18}	&	\small{3,9,18} & \small{3,9,18,27,54}	&	\small{3,9,18,27,54}\\    \hline
\small{RPT} & \small{3,9,18}	&	\small{3,9,18}	&	\small{3,9,18} &	\small{3,9,18}	&	\small{3,9,18}  & \small{1,3,9,18}	&	\small{3,6,9,18,27}	\\	\hline
\small{OCCPT} & \small{3,9,18}	&	\small{3,9,18}	&	\small{3,9,18} & \small{3,9,18}	&	\small{3,9,18} & \small{3,9,18,27,54}	&	\small{3,9,18,27,54}\\    \hline
\small{CCPT$^{(1)}$} & \small{3,9,18}	&	\small{9,18}	&	\small{9,18,27,54} & \small{3,9,18,54}	&	\small{3,9,18,27,54} & \small{9,27,54}	&	\small{9,18,27,54} \\    \hline
\small{CCPT$^{(2)}$} & \small{9,18}	&	\small{9,18}	&	\small{9,18} &	\small{3,9,18,54}	&	\small{9,18,27,54} & \small{9,18,54} 	&	\small{9,27,54}	\\	\hline
\end{tabular}}
\end{adjustbox}
\normalsize
\end{table}

In general, the required period may not be a divisor of the signal length.
In the following subsection, this issue is addressed by considering a dictionary based approach.
\subsection{Non-divisor Period and Frequency Estimation}
In this scenario, the signal is projected onto each and every subspace $s_1$ to $s_{P_{max}}$, instead of projecting only onto the divisor subspaces (as in NPM).
Here $P_{max}$ is the maximum possible period that exists in the signal. 
This generates a fat matrix $\mathbf{F}$ known as the CCPT dictionary. 
Since, $\mathbf{F}$ is fat, there exist multiple solutions ($\mathbf{b}$) for the given signal ($\mathbf{x}$) representation:
\begin{equation}
[\mathbf{x}]_{N{\times}1}=[\mathbf{F}]_{N{\times}\hat{N}}[\mathbf{b}]_{\hat{N}{\times}1},\ \text{where}\ \hat{N} = \sum\limits_{i=1}^{P_{max}}\varphi(i). 
\end{equation} 
In \cite{7109930} and \cite{7094814}, the authors proposed a similar kind of approach for DFT (Farey dictionary) and RPT (RPT dictionary). 
Here the non-divisor period estimation is treated as a 
data fitting problem to reduce the 
computational complexity.
To get the best fit of the given signal with the signals having smaller periods
an optimization problem is formulated as follows:
\begin{equation}
min\ ||\mathbf{T}\mathbf{b}||_2\ \text{s.t.}\ \mathbf{x}=\mathbf{F}\mathbf{b}.
\end{equation}
Here $\mathbf{T}$ is a diagonal matrix consisting of $f(p_i)$ as elements and $p_i$ is the period of $i^{th}$ column in $\mathbf{F}$.
This has a closed-form expression for the optimal solution ($\mathbf{\hat{b}}$) as given below:
\begin{equation}
\mathbf{\hat{b}} = \mathbf{T^{-2}}\mathbf{F^H}(\mathbf{F}\mathbf{T^{-2}}\mathbf{F^H})^{-1}\mathbf{x}.
\end{equation}   
Fig. \ref{f2}(a)-(c) show the strength of each period present in $x_2(n)$ using CCPT dictionaries with $f(p_i) = {p_i}^2$ and $P_{max} = 50$.
For a detailed dictionary approach and for the results of Farey and RPT dictionaries refer \cite{7109930} and \cite{7094814}.
From Fig. \ref{f2}(a)-(c), the period of $x_2(n)$ is equal to $lcm(5,8)$. 

In addition to this, we can estimate the frequency and phase information from CCPT dictionary coefficients by following the same procedure used for CCPT coefficients earlier.
Fig. \ref{f2}(d) shows the absolute values of $\mathbf{\hat{b}}$ obtained using OCCPT dictionary. Here only $70$ coefficients are displayed in the figure, as the rest of the coefficient values are almost equal to zero. 
From Fig. \ref{f2}(d), the significant non-zero coefficient indices $19$ and $20$ are belong to $v_{8,1}$. It indicates the presence of $45 Hz/315 Hz$ frequency component in the signal. These values are $0.0927$ and $-0.0905$, then $-tan^{-1}\left(\frac{-0.0905}{0.0927}\right) = 0.773 \approx \frac{\pi}{4} = 0.785\ rad$.
So, the proposed transforms can be generalized to estimate the non-divisor period, frequency and phase information as well.
\begin{figure}[!h]
\centering
 \includegraphics[width=4.2in,height=2.2in]{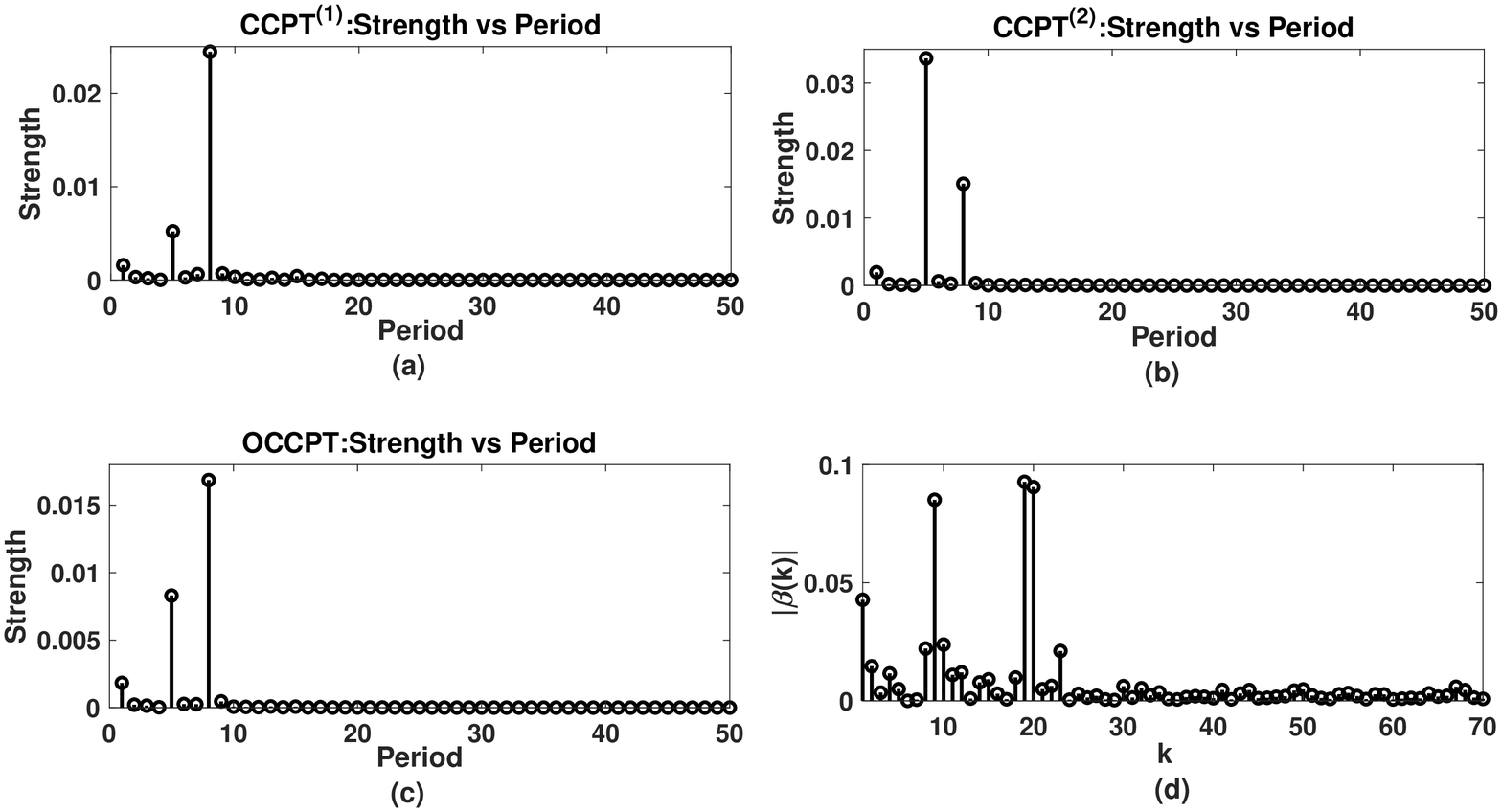} 
\caption[\small{(a)-(c) The strength {\footnotesize vs} period plots  of $x_2(n)$ obtained from the optimal solution ($\mathbf{\hat{b}}$) using CCPT$^{(1)}$, CCPT$^{(2)}$ and OCCPT dictionaries respectively. (d)  Absolute values of optimal solution $\mathbf{\hat{b}}$ computed for $x_2(n)$ using OCCPT dictionary.}]{\small{(a), (b) and (c)- The strength {\footnotesize vs} period plots of $x_2(n)$ obtained from the optimal solution ($\mathbf{\hat{b}}$) using CCPT$^{(1)}$, CCPT$^{(2)}$ and OCCPT dictionaries respectively. (d) Absolute values of optimal solution $\mathbf{\hat{b}}$ computed for $x_2(n)$ using OCCPT dictionary.}}
\label{f2}
\end{figure}
\subsubsection{Analysis of Computational Complexity}
In Farey dictionary $\mathbf{F}{\in}M_{N,\hat{N}}$, the columns of $\mathbf{F}$ follow subspace wise complex conjugate symmetry.
 As a consequence, the \textit{computation of $(\mathbf{F}\mathbf{T^{-2}}\mathbf{F^H})$ results in a real matrix involving complex multiplications.
While for CCPT and RPT dictionaries this computation involves real multiplications}.
Table \ref{tab:Comparison of computational complexity between different dictionaries} tabulates the number of real multiplications and additions required for computing $(\mathbf{F}\mathbf{T^{-2}}\mathbf{F^H})$.
\begin{table}[h]
\centering
\centering
\footnotesize
\caption{C\scriptsize{OMPARISON OF COMPUTATIONAL COMPLEXITY BETWEEN DIFFERENT DICTIONARIES}}
\label{tab:Comparison of computational complexity between different dictionaries}
\begin{adjustbox}{max width=\textwidth}
\renewcommand{\arraystretch}{1.5}
\scalebox{0.75}{
\begin{tabular}{|M{2cm}|M{2.4cm}|M{2.4cm}|M{2.4cm}|} \hline
\bfseries{\small{}}  
 & \bfseries{\footnotesize{\makecell{CCPT \\ Dictionary}}} & \bfseries{\footnotesize{\makecell{Farey \\ Dictionary}}} & \bfseries{\footnotesize{\makecell{RPT \\ Dictionary}}} \\ \hline 
\small{\makecell{Number of\\ Multiplications}} &	${N^2}{\hat{N}}+N\hat{N}$	&	$4{N^2}{\hat{N}}+2N\hat{N}$ &	${N^2}{\hat{N}}+N\hat{N}$	\\	\hline
\small{\makecell{Number of\\ Additions}} &	${N^2}{\hat{N}}+N\hat{N}-{N^2}-N$	&	$4{N^2}{\hat{N}}+2N\hat{N}-2{N^2}-2N$ &	${N^2}{\hat{N}}+N\hat{N}-{N^2}-N$	\\	\hline
\end{tabular}}
\end{adjustbox}
\normalsize
\end{table}
 
If $x(n){\in}\mathbb{R}^N$, then $\mathbf{\hat{b}}$ exhibits subspace wise complex conjugate symmetry for Farey dictionary.
Due to this, the remaining complexity (apart from $(\mathbf{F}\mathbf{T^{-2}}\mathbf{F^H})$ complexity) in computing $\mathbf{\hat{b}}$ is same for both CCPT and Farey dictionaries. 
If $x(n){\in}\mathbb{C}^N$, this symmetry fails, then CCPT and RPT dictionaries have a computational advantage over the Farey dictionary. 
From Table \ref{tab:Comparison of computational complexity between different dictionaries}, the complexity of CCPT and RPT dictionaries is approximately $75\%$ less in comparison with the Farey dictionary.
This computational benefit is also evident from the table given in \cite{7109930}, where the complexity of the Farey dictionary is compared with 
different other dictionaries, which involves real multiplications in computing $(\mathbf{F}\mathbf{T^{-2}}\mathbf{F^H})$ and $\mathbf{\hat{b}}$. 
The overall comparison of the proposed transforms with RPT and DFT is tabulated in TABLE \ref{tab:Comparison of different transformation techniques}.
\begin{table}[h]
\centering
\centering
\footnotesize
\caption{C\scriptsize{OMPARISON OF DIFFERENT TRANSFORMS}}
\label{tab:Comparison of different transformation techniques}
\begin{adjustbox}{max width=\textwidth}
\renewcommand{\arraystretch}{1.5}
\scalebox{0.74}{
\begin{tabular}{|M{1.6cm}|M{1cm}|M{2.2cm}|M{1.4cm}|M{1.2cm}|M{0.7cm}|} \hline
\bfseries{\small{}} & \multirow{2}{3em}{\bfseries{\footnotesize{\makecell{Basis\\ Type\\}}}} & \multirow{2}{8em}{{\bfseries{\footnotesize{\makecell{Orthogonality b/w\\Transformation\\Matrix columns}}}}} & \multirow{2}{5em}{\bfseries{\footnotesize{\makecell{Period\\ Information\\ }}}}& \multicolumn{2}{|c|}{\bfseries{\footnotesize{\makecell{Frequency\\ Information}}}} \\ \cline{5-6}
 & 	&	&  & \bfseries{\footnotesize{Magnitude}} & \bfseries{\footnotesize{Phase}}\\    \hline
\small{DFT} & \small{Complex}	&	\checkmark	&	\checkmark & \checkmark	&	\checkmark\\    \hline
\small{CCPT$^{(1)}$} & \small{Real}	&	$\mathbf{\mathbb{\times}}$	&	\checkmark &	\checkmark	&	$\mathbb{\times}$	\\	\hline
\small{CCPT$^{(2)}$} & \small{Real}	&	$\mathbb{\times}$	&	\checkmark &	\checkmark	&	$\mathbb{\times}$	\\	\hline
\small{OCCPT} & \small{Real}	&	\checkmark	&	\checkmark & \checkmark	&	\checkmark\\    \hline
\small{RPT} & \small{Integer}	&	\checkmark(\small{If matrix size is in power of 2})	&	\checkmark &	$\mathbb{\times}$	&	$\mathbb{\times}$	\\	\hline
\end{tabular}}
\end{adjustbox}
\normalsize
\end{table}
\subsection {Usage of The Proposed Basis in Non-Divisor Subspaces}
The usage of non-divisor subspaces in a signal representation do not hold the orthogonality between the basis elements. 
The proposed bases have computational benefit in such scenarios.
Dictionary based approach discussed above is one such example. Now we mention another example: Given a periodic signal $x(n)$, the minimum data length ($N_{min}$) required to estimate its integer period from a list of candidate integer periods $P = \{P_1,P_2,\dots,P_K\}$ is $N_{min} = \substack{max\\P_i,P_j{\in}P}P_i+P_j-(P_i,P_j)$ \cite{8320848, 7527468}. 
Construct a matrix $\mathbf{H}{\in}M_{N_{min}}(\mathbb{C})$ such that $\mathbf{x} = \mathbf{H}\mathbf{z}$,
where $\mathbf{H}$ includes non-divisor subspaces of $N_{min}$.
For example, let $P = \{6,8\}$, this implies $N_{min} = 12$. Then $\mathbf{H}$ is constructed by using the basis of $s_1, s_2,s_3,s_4,s_6$ and $s_8$. Notice that $s_8$ is not a divisor subspace of $12$. So $\mathbf{H}$ is a non-orthogonal matrix, then $\mathbf{z}=\mathbf{H}^{-1}\mathbf{x}$. One can verify that the $\mathbf{H}$ constructed using complex exponential sequences (or) Ramanujan sums (or) CCPSs is a full rank matrix. In such scenarios using the real-valued CCPSs as basis is computationally efficient for the period and its corresponding frequency estimation over complex exponential sequences.
\subsection{Real-World Example:  ECG Signal Analysis}
Here the problem of R peak (QRS complex) delineation in an ECG signal 
is considered, which is important in many ECG based applications \cite{Pan,Biel}.
We discuss, how to address this problem using DFT, RPT, and CCPTs.
A $10sec$ ECG data with a sampling frequency of $500Hz$ is considered for the analysis (record number $19$ of person $1$ from ECG-ID database \cite{goldbergerphysiobank}).
For easy computation, we further down-sampled this data by a factor of $8$, the resultant $625$ length signal 
is depicted in Fig \ref{f3}(a).
\begin{figure}[!h]
\centering
 \includegraphics[width=4.1in,height=2.2in]{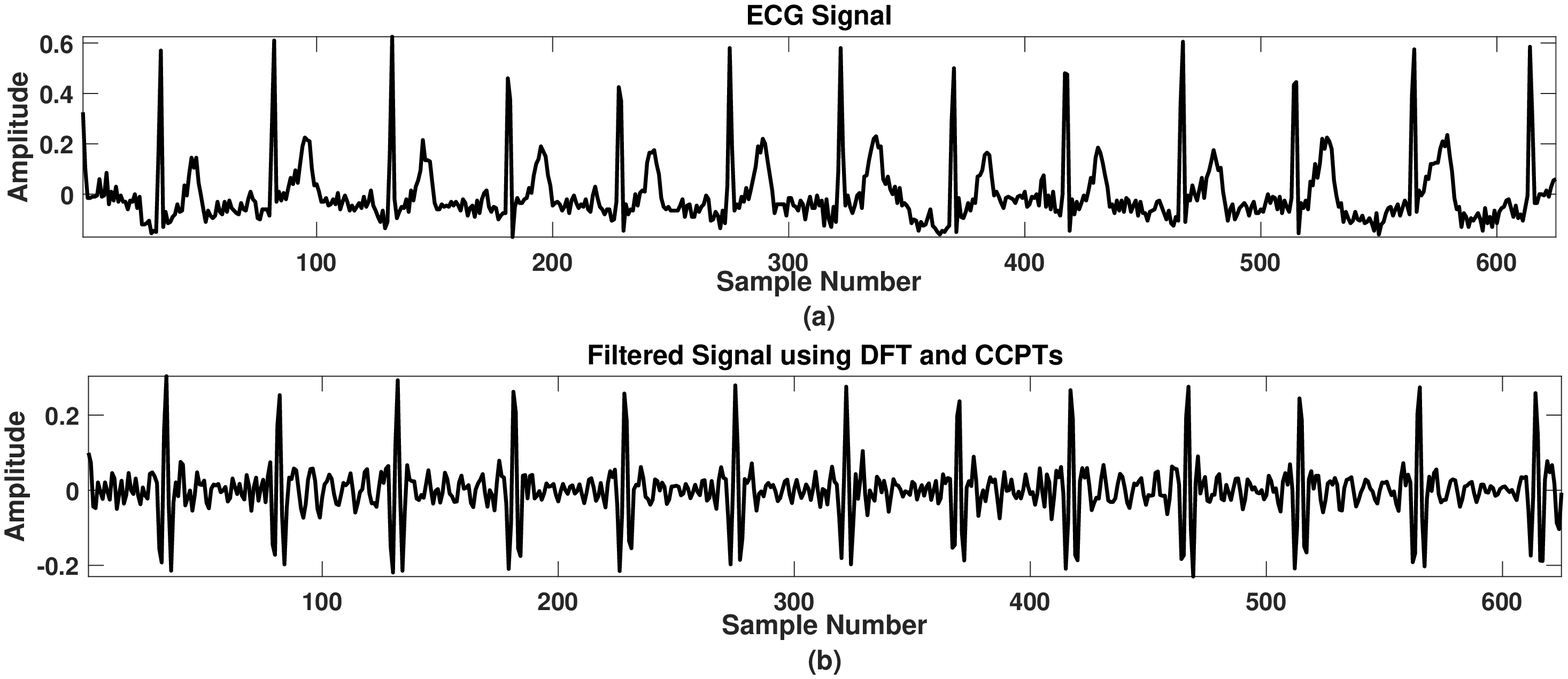} 
\caption[\small{(a) Raw ECG signal. (b) Filtered signal using DFT and CCPTs (CCPT$^{(1)}$, CCPT$^{(2)}$ and OCCPT).}]{(a) Raw ECG signal. (b) Filtered ECG signal using DFT and CCPTs (CCPT$^{(1)}$, CCPT$^{(2)}$ and OCCPT).}
\label{f3}
\end{figure}
Here the period of ECG, i.e.,
the average RR interval is ${0.7733sec \ (\approx48\text{ samples}})$, and $48{\nmid}625$.
As RPT gives only the divisor period information \cite{Shah}, it fails to estimate the R peak locations, whereas CCPTs and DFT give frequency information as well.
Moreover, the frequency range of the QRS complex is $8-20Hz$ \cite{Elgendi}.
So we have reconstructed a signal (filtered) as shown in Fig. \ref{f3}(b), 
by selecting the transform coefficients (of DFT and CCPTs) corresponding to $8-20Hz$ band.
The reconstructed signal from both DFT and CCPTs is the same, since the basis of DFT is orthogonal and in CCPTs the basis is CCS wise orthogonal.
Now, a better estimation of R peak locations can be achieved from this filtered signal, using a standard adaptive threshold algorithm \cite{Pan}.

Addressing the given problem using the dictionary based approach gives the results as shown in Fig. \ref{f4}.
Here we considered $P_{max}=250$ and $f(p_i) = \varphi(p_i)$.
Hence, the period of the ECG signal is $lcm(12,16,48)= 48$, $lcm(12,16)= 48$, $lcm(8,12,16,48)= 48$, $lcm(4,8,12,16)= 48$ and $lcm(8,12,16,48)= 48$ using CCPT$^{(1)}$, CCPT$^{(2)}$, OCCPT, RPT and DFT dictionaries respectively. 
Notice that CCPT$^{(1)}$ and CCPT$^{(2)}$ are giving some spurious periods, as they are more sensitive to noise.
Now, it is easy to estimate the R peak location from the period of the ECG signal. 
\begin{figure}[!h]
\centering
 \includegraphics[width=6.2in,height=2.8in]{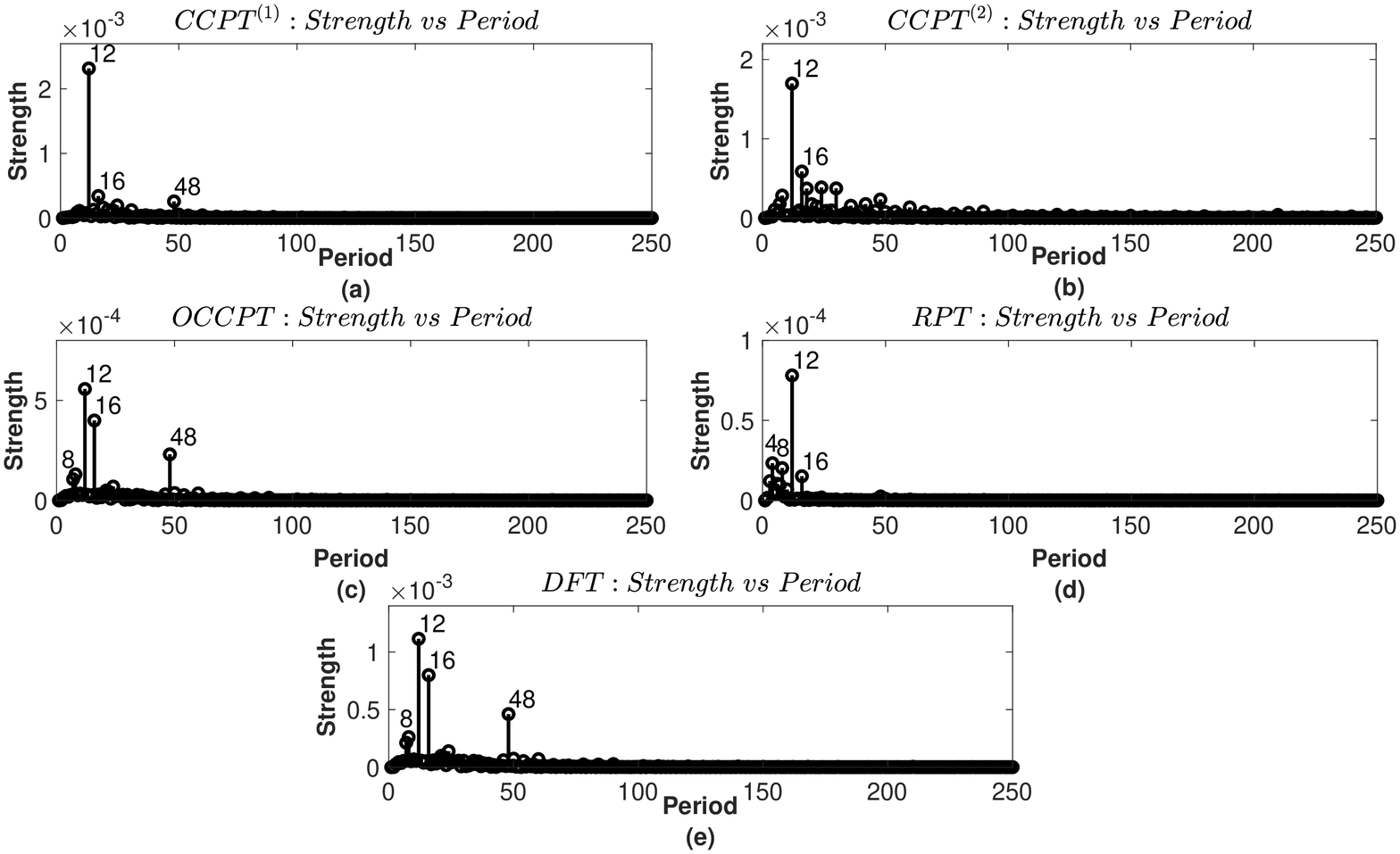} 
\caption[\small{(a)-(e) The strength {\footnotesize vs} period plots of ECG signal obtained from the optimal solution ($\mathbf{\hat{b}}$) using CCPT$^{(1)}$, CCPT$^{(2)}$, OCCPT, RPT and DFT dictionaries respectively.}]{\small{(a)-(e) The strength {\footnotesize vs} period plots of ECG signal obtained from the optimal solution ($\mathbf{\hat{b}}$) using CCPT$^{(1)}$, CCPT$^{(2)}$, OCCPT, RPT and DFT dictionaries respectively.}}
\label{f4}
\end{figure}

By considering the problem of R peak delineation we proved that the period and frequency of an ECG signal can be extracted using the proposed nested periodic matrices/dictionaries.
The DFT matrix/dictionary also gives the same result, but with high computational complexity, whereas the RPT matrix/dictionary gives only the period information of an ECG signal.
\section{Conclusion}
In this paper, we addressed the problem of finite length signal representation by introducing three NPMs.
These NPMs are constructed by providing alternate bases for CCS using CCPSs.
Out of three NPMs, one has mutually orthogonal columns, this results in an orthogonal transform named OCCPT and the remaining two are non-orthogonal transforms.
We proposed two different interpretations for OCCPT such that the information about the period and frequency is explicitly available in each interpretation.
In addition, a DIT based fast computational algorithm is proposed for OCCPT, whenever the length of the signal is equal to $2^v$, $v{\in}\mathbb{N}$. 
Further, we evaluated the performance and computational complexity of the proposed transforms in period and frequency estimation. The results are compared with DFT and RPT.
The proposed theory is justified with some simulated and real-world examples.

\appendix
\textit{Proof of \textbf{Theorem \ref{Th}:}} Let ${L_1}{\geq}3$, ${L_2}{\geq}3$, 
\begin{equation}
\begin{aligned}
E =\sum\limits_{n=0}^{L-1}c_{L_1,k_1}^{(1)}(n-{l_1})c_{L_2,k_2}^{(1)}(n-{l_2}),
\end{aligned}
\label{Orthogonal1}
\end{equation}
$x_1=\frac{2{\pi}{k_1}}{L_1}$ and $x_2=\frac{2{\pi}{k_2}}{L_2}$.
Using Euler's identity and definition of CCPSs, $E$ can be decomposed as
\begin{equation}
\begin{aligned}
E &=\Big[e^{-j({x_1}{l_1}+{x_2}{l_2})}\sum\limits_{n=0}^{L-1}e^{j(x_1+x_2)n}\\&+e^{-j({x_1}{l_1}-{x_2}{l_2})}\sum\limits_{n=0}^{L-1}e^{j(x_1-x_2)n}+e^{j({x_1}{l_1}-{x_2}{l_2})}\\&\sum\limits_{n=0}^{L-1}e^{-j(x_1-x_2)n}+e^{j({x_1}{l_1}+{x_2}{l_2})}\sum\limits_{n=0}^{L-1}e^{-j(x_1+x_2)n}\Big].
\end{aligned}
\label{Orthogonal2}
\end{equation}
Since $L = lcm(L_1,L_2)\ \exists\ {{d_1},{d_2}\in\mathbb{Z}}\ s.t.\ L={L_1}{d_1}\ \text{and}\ L={L_2}{d_2}$. From this
\begin{equation}
\sum\limits_{n=0}^{L-1}e^{{\pm}j({x_1}\pm{x_2})n} 
=\frac{1-e^{{\pm}j2{\pi}({k_1}{d_1}{\pm}{k_2}{d_2})}}{1-e^{\frac{{\pm}j2{\pi}({k_1}{d_1}{\pm}{k_2}{d_2})}{L}}}=0.
\label{Orthogonal3}
\end{equation}
By substituting (\ref{Orthogonal3}) in (\ref{Orthogonal2}), we get, $E=0$. 
The above condition is valid even if $L_1=L_2=L$ and $k_1 {\neq} k_2$.
If $L_1 = L_2 = L$, $k_1 = k_2 = k$  and $l_1{\neq}l_2$, then
\begin{equation}
\sum\limits_{n=0}^{L-1}e^{{\pm}j{({x_1}+{x_2})}n} = 0\ \text{and}\ \sum\limits_{n=0}^{L-1}e^{{\pm}j{({x_1}-{x_2})}n} = L.
\end{equation}
In this case $E=2Lcos\Big(\frac{2{\pi}{k_1}({l_1}-{l_2})}{L_1}\Big)$.
Combining the above cases with
\small{$E = \begin{cases}
	L,& \text{if}\ L_1=L_2=1\text{ (or) }2\\
    0, & \text{if }\ L_1=1\ {\&}\ L_2=2\\
    & \text{ (or) }L_1=2\ {\&}\ L_2=1
\end{cases}$}
leads to
\begin{equation}
E= 2L{M}cos\bigg(\frac{2{\pi}{k_1}({l_1-l_2})}{L_1}\bigg)\delta({L_1}-{L_2})\delta({k_1}-{k_2}).
\end{equation}
\normalsize
\textit{Proof of \textbf{Circular Shift of a Sequence Property:}}
Using 
$c_{p_i,k}^{(1)}\Big(((n-m))_N\Big)$ $=$ $\Big[\frac{1}{2M}c_{p_i,k}^{(1)}(n)c_{p_i,k}^{(1)}\Big(((-m))_N\Big)-\frac{M}{2}c_{p_i,k}^{(2)}(n)c_{p_i,k}^{(2)}\Big(((-m))_N\Big)\Big]$, $c_{p_i,k}^{(2)}\Big(((n-m))_N\Big) = \frac{1}{2M}\Big[c_{p_i,k}^{(2)}(n)c_{p_i,k}^{(1)}\Big(((-m))_N\Big)+c_{p_i,k}^{(1)}(n)c_{p_i,k}^{(2)}\Big((-m)\Big)_N\Big]$ and the synthesis equation given in (\ref{Ortho_CCPT_Synth}), we can write
\begin{equation}
\nonumber
x\Big(((n-m))_N\Big) = \sum_{{p_i}|N} \sum\limits_{\substack{{k}=1\\(k,p_i)=1}}^{\floor*{\frac{p_i}{2}}}\hat{\beta}_{0{k}i}c_{p_i,k}^{(1)}(n)+ \hat{\beta}_{1{k}i}c_{p_i,k}^{(2)}(n),
\end{equation}
where $m{\in}\mathbb{Z}$ and
\begin{equation}
\begin{aligned}
 &\hat{\beta}_{0{k}i}=\frac{1}{2M}\left[{\beta}_{0{k}i}c_{p_i,k}^{(1)}\Big(((-m))_N\Big)+{\beta}_{1{k}i}c_{p_i,k}^{(2)}\Big(((-m))_N\Big)\right],\\
&\hat{\beta}_{1{k}i} = \left[\frac{1}{2M}{\beta}_{1{k}i}c_{p_i,k}^{(1)}\Big(((-m))_N\Big)-\frac{M}{2}{\beta}_{0{k}i}c_{p_i,k}^{(2)}\Big(((-m))_N\Big)\right].
\end{aligned}
\label{Translation_1}
\end{equation}
\normalsize
Now using \footnotesize{$M =\begin{cases}
	\frac{1}{2},& \text{if}\ p_i=1\ \text{(or)}\ 2\\
    1, & \text{if }\ {p_i{\geq}3}
\end{cases}$}, \normalsize and the definitions of CCPSs, we can simplify (\ref{Translation_1}) as given in (\ref{Translation}).

\textit{Proof of \textbf{Circular Convolution Property:}}
Given $x(n) = x_1(n){\circledast}x_2(n)$, then using (\ref{Ortho_CCPT_Analy}) we can write
\begin{equation}
\nonumber
{\beta_{0{k}i}} = \frac{1}{2N{M}}\sum\limits_{n=0}^{N-1}\left[\sum\limits_{l=0}^{N-1}x_1(l)x_2\Big(((n-l))_N\Big)\right]c_{p_i,k}^{(1)}(n).
\end{equation}
Let $n-l = r$, then $\beta_{0{k}i}= \frac{\hat{\beta}_{0ki}}{2M}\textbf{P}-\frac{M\hat{\beta}_{1ki}}{2}\textbf{Q},$
where $\textbf{P}=\sum\limits_{r=-l}^{N-1-l}x_2\Big(((r))_N\Big)c_{p_i,k}^{(1)}(r)=2NM\tilde{\beta}_{0ki}$ and 
$\textbf{Q} = \sum\limits_{r=-l}^{N-1-l}x_2\Big(((r))_N\Big)c_{p_i,k}^{(2)}(r)=2NM\tilde{\beta}_{1ki}$.
Hence
\begin{equation}
{\beta_{0{k}i}} =N\left[\hat{\beta}_{0ki}\tilde{\beta}_{0ki}-{M^2}\hat{\beta}_{1ki}\tilde{\beta}_{1ki}\right].
\end{equation}
Similarly, we can derive
\begin{equation}
{ {\beta_{1{k}i}} = N\left[\hat{\beta}_{0ki}\tilde{\beta}_{1ki}+\hat{\beta}_{1ki}\tilde{\beta}_{0ki}\right].\ }
\end{equation}

Equation (\ref{Conv1}) is an immediate consequence of these equations (obtained by substituting $M$ value).

\textit{Proof of \textbf{Parseval's Relation:}}
Using the orthogonal CCPT synthesis equation, we can write
\begin{equation}
\nonumber
\footnotesize
\begin{aligned}
&\sum\limits_{n=0}^{N-1}|x(n)|^2 = \sum\limits_{n=0}^{N-1}x(n)x^*(n)= \sum_{{p_i}|N} \sum\limits_{\substack{{k}=1\\(k,p_i)=1}}^{\floor*{\frac{p_i}{2}}}\sum_{{p_j}|N} \sum\limits_{\substack{{k_1}=1\\(k_1,p_j)=1}}^{\floor*{\frac{p_j}{2}}}\sum\limits_{n=0}^{N-1}\\
&\Bigg(\underbrace{\beta_{0{k}i}\Big(\beta_{0{k_1}j}\Big)^{*}c_{p_i,k}^{(1)}(n)c_{p_j,k_1}^{(1)}(n)}_{T_1}+
\underbrace{\beta_{0{k}i}\Big(\beta_{1{k_1}j}\Big)^{*}c_{p_i,k}^{(1)}(n)c_{p_j,k_1}^{(2)}(n)}_{T_2}\\
&+\underbrace{\beta_{1{k}i}\Big(\beta_{0{k_1}j}\Big)^{*}c_{p_i,k}^{(2)}(n)c_{p_j,k_1}^{(1)}(n)}_{T_3}
+\underbrace{\beta_{1{k}i}\Big(\beta_{1{k_1}j}\Big)^{*}c_{p_i,k}^{(2)}(n)c_{p_j,k_1}^{(2)}(n)}_{T_4}\Bigg).
\end{aligned}
\normalsize
\end{equation}
Now using \textbf{Theorem \ref{Th3}}, the terms $T_2=T_3=0$ over $0{\leq}n{\leq}N-1$. Similarly, using \textbf{Theorem \ref{Th}}, the terms $T_1=T_4=2NM$, whenever $p_i = p_j$ and $k=k_1$. This implies
\begin{equation}
\sum\limits_{n=0}^{N-1}|x(n)|^2 = \sum_{{p_i}|N} \sum\limits_{\substack{{k}=1\\(k,p_i)=1}}^{\floor*{\frac{p_i}{2}}}2NM\left[|\beta_{0{k}i}|^2+|\beta_{1{k}i}|^2\right].
\end{equation}

\section*{Acknowledgement}
The authors would like to thank Mr. Shiv Nadar, founder and chairman of HCL and the Shiv Nadar Foundation.

\ifCLASSOPTIONcaptionsoff
  \newpage
\fi
\bibliographystyle{IEEEtran}
\bibliography{bibs_1}
%
\begin{IEEEbiography}[{\includegraphics[width=1in,height=1.2in]{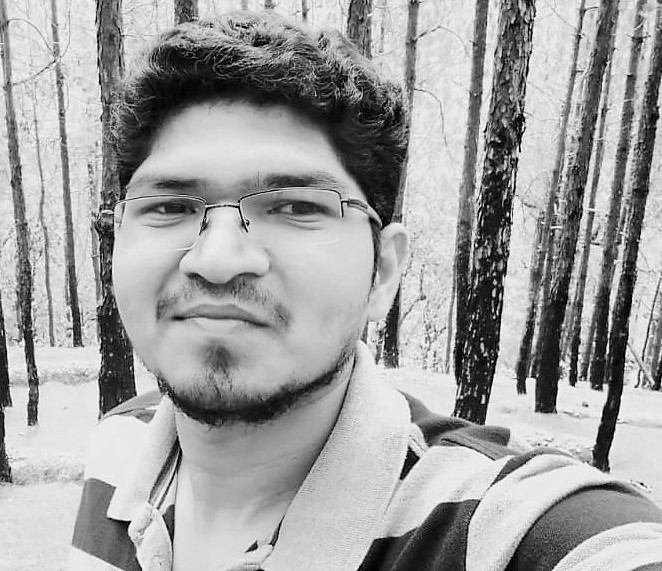}}]{Shaik Basheeruddin Shah}(S'18)
has a B.Tech degree in Electronics and Communication Engineering from the Vasireddy Venkatadri Institute of Technology (VVIT), in 2013 and a Master's degree in Computational Engineering from the Rajiv Gandhi University of Knowledge and Technologies (RGUKT) in 2015. Currently, he is a Ph.D. student at Shiv Nadar University, India. His research interests lie in the area of Discrete-time Signal Representation and Analysis.
\end{IEEEbiography}
\begin{IEEEbiography}[{\includegraphics[width=1in,height=1.2in]{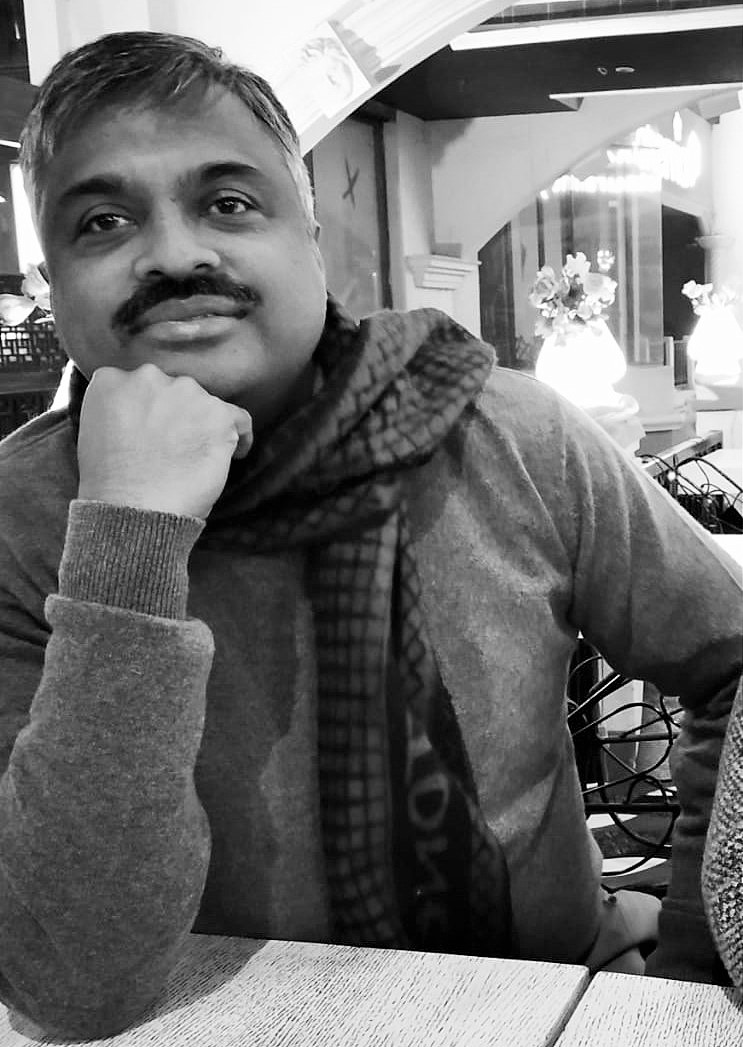}}]{Vijay Kumar Chakka}(M'09-SM'12) 
received B.Tech, M. Tech, Ph.D., in ECE from JNTU Hyderabad, National Institute of Technology,  Kurukshetra, India and  from National Institute of Technology, Trichy,   India on 1991,1993 and 2004 respectively.
He is currently working as a Professor at Shiv Nadar University, Greater Noida, India from 2014 onwards. He was Associate Professor at Dhirubhai Ambani Institute of Information and Communication Technology, Gandhinagar from 2002-2013. He was also an Adjunct Associate professor at IIT Gandhinagar from 2010-12. He was secretary to the IEEE communication chapter of IEEE Gujarat section from 2010-12. Before joining DA-IICT, he worked as a Senior Lecturer at National Institute of Technology, Trichy, India from 1994-2002. He worked as consultant for DRDO India, NSTL Vizag and many private companies in the area of Digital Signal Processing and Wireless Communication.  His research interests are Signal representation, Signal design for 5G and pre coders designs for wireless communication etc.
\end{IEEEbiography}
\begin{IEEEbiography}[{\includegraphics[width=1in,height=1.2in]{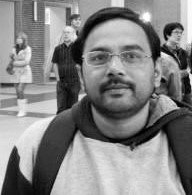}}]{Arikatla Satyanarayana Reddy}
has received B.Sc from the Ideal Degree College, Kakinada in 1993, an M.Sc-Mathematics, MA-Education from Andhra University, Vishakhapatnam in 1995, 1999 respectively. He received a Ph.D. degree in the Department of Mathematics and Statistics from the Indian Institute of Technology (IIT), Kanpur, India in 2012. Currently, he is working as an Associate Professor in the Department of Mathematics, Shiv Nadar University, India.
His research fields of expertise include Algebraic Graph Theory, Linear Algebra and Algebraic Number Theory.
\end{IEEEbiography}

%





\end{document}